\documentclass[%
 aip,
 sd,%
amsmath,amssymb,
twocolumn,
10pt,
a4paper,
nofootinbib,
tightenlines
]{revtex4-1}

\usepackage
{
    graphicx,
    amssymb,
    amsmath,
    amsthm,
    dsfont, 
    xcolor,
    enumitem,
    mathtools,
    ifthen,
    array,
    multirow,
    adjustbox,
    bbm,
    mdframed,
    upgreek
}

\usepackage[pdffitwindow=false,
            plainpages=false,
            pdfpagelabels=true,
            pdfpagemode=UseOutlines,
            pdfpagelayout=SinglePage,
            bookmarks=false,
            colorlinks=true,
            hyperfootnotes=false,
            linkcolor=blue,
            urlcolor=blue!30!black,
            citecolor=green!50!black]{hyperref}

\newtheorem{theorem}{Theorem}[section]

\newtheorem{lemma}[theorem]{Lemma}
\newtheorem{proposition}[theorem]{Proposition}
\newtheorem{definition}[theorem]{Definition}
\theoremstyle{definition}
\newtheorem{example}[theorem]{Example}
\newtheorem{remark}[theorem]{Remark}

\newtheorem{textalgorithm}[theorem]{Algorithm}

\newcommand{\ts}{\hspace*{0.1em}} 

\newcommand{\R}{\mathbb{R}}                                     
\newcommand{\innerprod}[2]{\left\langle #1,\, #2 \right\rangle} 
\newcommand{\dd}{\mathrm{d}}                                    
\providecommand{\norm}[1]{\left\lVert #1 \right\rVert}          

\newcommand{\overbar}[1]{\mkern 1.5mu\overline{\mkern-1.5mu#1\mkern-1.5mu}\mkern 1.5mu}
\newcommand\restr[2]{{\left.\kern-\nulldelimiterspace #1 \vphantom{\big|} \right|_{#2}}} 


\newcommand\xqed[1]{\leavevmode\unskip\penalty9999 \hbox{}\nobreak\hfill \quad\hbox{#1}}
\newcommand{\exampleSymbol}{\xqed{$\blacktriangle$}} 

\newcommand{\inspace}{\mathbb{X}} 


\newcommand{\ebd}[1][]{
   \ifthenelse{\equal{#1}{}}{\mathcal{E}}{\mathcal{E}_{#1}}}
\newcommand{\pro}[1][]{
   \ifthenelse{\equal{#1}{}}{\mathcal{Q}}{\mathcal{Q}_{#1}}}


\newcommand{\pf}[1][]{
   \ifthenelse{\equal{#1}{}}{\mathcal{P}}{\mathcal{P}_{#1}}}
\newcommand{\epf}[1][]{
   \ifthenelse{\equal{#1}{}}{\widehat{\mathcal{P}}}{\widehat{\mathcal{P}}_{#1}}}
\newcommand{\ko}[1][]{
   \ifthenelse{\equal{#1}{}}{\mathcal{K}}{\mathcal{K}_{#1}}}
\newcommand{\eko}[1][]{
   \ifthenelse{\equal{#1}{}}{\widehat{\mathcal{K}}}{\widehat{\mathcal{K}}_{#1}}}

\newcommand{\cov}[1][]{\mathcal{C}_\mathit{\scriptscriptstyle #1}} 
\newcommand{\ecov}[1][]{\widehat{\mathcal{C}}_\mathit{\scriptscriptstyle #1}} 
\newcommand{\mcov}[1][]{C_\mathit{\scriptscriptstyle #1}} 
\newcommand{\mecov}[1][]{\widehat{C}_\mathit{\scriptscriptstyle #1}} 

\newcommand{\gram}[1][]{G_\mathit{\scriptscriptstyle #1}} 
\newcommand{\agram}[1][]{\overbar{G}_\mathit{\scriptscriptstyle #1}} 
\newcommand{\tagram}[1][]{\widetilde{G}_\mathit{\scriptscriptstyle #1}} 
\newcommand{\rgram}[1][]{\widehat{G}_\mathit{\scriptscriptstyle #1}} 


\DeclareMathOperator{\diag}{diag}

\mathtoolsset{centercolon}

\allowdisplaybreaks

\definecolor{boxback}{gray}{0.95}

\makeatletter
\def\blfootnote{\gdef\@thefnmark{}\@footnotetext}
\makeatother

\begin{document}

\title{A kernel-based approach to molecular conformation analysis}

\author{Stefan Klus}%
\email[]{stefan.klus@fu-berlin.de}
\affiliation{Department of Mathematics and Computer Science, Freie Universit\"at Berlin, 14195 Berlin, Germany}

\author{Andreas Bittracher}
\email[]{bittracher@mi.fu-berlin.de}
\affiliation{Department of Mathematics and Computer Science, Freie Universit\"at Berlin, 14195 Berlin, Germany}

\author{Ingmar Schuster}
\email[]{ingmar.schuster@zalando.de}
\affiliation{Zalando Research, 11501 Berlin, Germany}

\author{Christof Sch\"utte}
\email[]{christof.schuette@fu-berlin.de}
\affiliation{Department of Mathematics and Computer Science, Freie Universit\"at Berlin, 14195 Berlin, Germany}
\affiliation{Zuse Institute Berlin, 14195 Berlin, Germany}%

\blfootnote{S. Klus and A. Bittracher contributed equally to this work.}

\begin{abstract}
We present a novel machine learning approach to understanding conformation dynamics of biomolecules. The approach combines kernel-based techniques that are popular in the machine learning community with transfer operator theory for analyzing dynamical systems in order to identify conformation dynamics based on molecular dynamics simulation data. We show that many of the prominent methods like Markov State Models, EDMD, and TICA can be regarded as special cases of this approach and that new efficient algorithms can be constructed based on this derivation. The results of these new powerful methods will be illustrated with several examples, in particular the alanine dipeptide and the protein NTL9.
\end{abstract}

\maketitle

\section{Introduction}

The spectral analysis of transfer operators such as the Perron--Frobenius or Koopman operator is by now a well-established technique in molecular conformation analysis \cite{Schuette1999, pande2010everything, msm_milestoning, CN14}. The goal is to estimate---typically from long molecular dynamics trajectories---the slow conformational changes of molecules. These slow transitions are critical for a better understanding of the functioning of peptides and proteins. Transfer operators propagate probability densities or observables. Since these operators are infinite-dimensional, they are typically projected onto a space spanned by a set of predefined basis functions. The integrals required for this projection can be estimated from training data, resulting in methods such as \emph{extended dynamic mode decomposition (EDMD)}~\cite{WKR15, KKS16} or the \emph{variational approach of conformation dynamics (VAC)}~\cite{NoNu13, NKPMN14}. For a comparison of these methods, see Ref.~\onlinecite{KNKWKSN18}. A similar framework has been introduced by the machine learning community. Probability densities as well as conditional probability densities can be embedded into a \emph{reproducing kernel Hilbert space} (RKHS)~\cite{MFSS16}. These embedded probability densities are then mapped forward by a conditional mean embedding operator, which can be viewed as an embedded Perron--Frobenius operator~\cite{KSM17}. An approximation of this operator in the RKHS can be estimated from training data, leading to methods that are closely related to EDMD and its variants. In the same way, this approach can be extended to the conventional Perron--Frobenius and Koopman operator, under the assumption that the probability densities or observables are functions in the RKHS \cite{KSM17}.

In this paper, we will derive different kernel-based empirical estimates of transfer operators. There are basically two ways to obtain a finite-dimensional approximation of such operators: The first one explicitly uses the feature map representation of the kernel, the second is based on kernel evaluations only. The advantage of the former is that the size of the resulting eigenvalue problem depends only on the size of the feature space, but not on the size of the training data set (this corresponds to EDMD or VAC). However, this approach can in general not be applied to the typically high-dimensional systems prevalent in molecular dynamics due to the curse of dimensionality and furthermore requires an explicit feature space representation, i.e., an explicit basis of the approximation space. For the kernel-based variant, the size of the eigenvalue problem is independent of the number of basis functions---and thus allows for implicitly infinite-dimensional feature spaces---, but depends on the size of the training data set (this corresponds to kernel EDMD or kernel TICA). Kernel-based methods thus promise increased performance and accuracy in transfer operator-based conformation analysis. The framework presented within this work allows us to derive both approaches (classical and kernel-based) as different approximations of kernel transfer operators and thus subsumes the aforementioned methods.

We will show that the eigenvalue problem for a given transfer operator can be transformed by rewriting eigenfunctions in terms of the integral operator associated with the kernel. The collocation-based discretization of this modified problem formulation then directly results in purely kernel-based methods, without having to \emph{kernelize} existing methods. (By \emph{kernelize} we mean formally rewriting an algorithm in terms of kernel evaluations only.) Furthermore, we show that for stochastic systems it is possible to circumvent the drawback that the amount of usable training data is limited by the maximum size of the eigenvalue problem that can be solved numerically using averaging techniques, which can result in more accurate and smoother eigenfunctions. We will apply the derived algorithms to realistic molecular dynamics data and analyze the alanine dipeptide and the protein NTL9. The conformations obtained for NTL9 are comparable to results computed using deep neural networks, see Ref.~\onlinecite{Noe_VAMPnets2018}, and demonstrate that our approach is competitive with state-of-the-art deep learning methods.

The remainder of this paper is structured as follows: In Section~\ref{sec:Prerequisites}, we will briefly introduce transfer operators, reproducing kernel Hilbert spaces, and the integral operator associated with the kernel. We will then show  in Section~\ref{sec:Kernel-based discretization of eigenvalue problems} that methods to estimate eigenfunctions of transfer operators from training data can be derived using the integral operator associated with the kernel. Section~\ref{sec:Applications} illustrates that it is possible to obtain accurate eigenfunction approximations even for high-dimensional molecular dynamics problems. Future work and open questions are discussed in Section~\ref{sec:Conclusion}. The appendix contains detailed derivations of the methods, which are based on the Mercer feature space representation of a kernel, and proofs. Moreover, we will define kernel transfer operators and compare our proposed algorithms with other well-known methods such as kernel EDMD and kernel TICA, which have been proposed in Refs.~\onlinecite{WRK15, SP15}.

\section{Prerequisites}
\label{sec:Prerequisites}

In what follows, we will introduce transfer operators, kernels, and reproducing kernel Hilbert spaces. The notation used throughout the manuscript is summarized in Table~\ref{tab:Notation}.

\begin{table}[tb]
    \centering
    \caption{Overview of notation.}
    \begin{tabular}{ll}
        \hline
        $ m $                              & number of snapshots \\
        $ n $                              & number of basis functions \\
        $ k $                              & kernel \\
        $ \mathcal{E}_k $                  & integral operator associated with $ k $ \\
        $ \gamma_\imath, \phi_\imath $     & eigenvalues and eigenfunctions of the integral operator \\
        $ \ko, \pf $                       & Koopman and Perron--Frobenius operator \\
        $ \ko[k], \pf[k] $                 & kernel Koopman and Perron--Frobenius operator \\
        $ \lambda_\imath, \varphi_\imath $ & eigenvalues and eigenfunctions of transfer operators \\
        \hline
    \end{tabular}
    \label{tab:Notation}
\end{table}

\subsection{Transfer operators}
\label{sec:transferoperators}

We briefly motivate the usefulness of transfer operators for conformational metastability analysis. For a more detailed introduction, we refer to Refs.~\onlinecite{LaMa94, KKS16}.

Let the molecular process be given by a stochastic process $X_t$ on some high-dimensional state space $ \inspace \subseteq \R^d $, for instance, the space of Cartesian coordinates of all atoms of the system.
One can imagine the system as a random walk in some high-dimension potential energy landscape, i.e., $X_t$ could be described by an overdamped Langevin equation (see Example \ref{ex:4well_MSMexample}), but we do not require this explicitly.

As a stochastic process, the pointwise evolution of $X_t$ is formally described by the system's \emph{transition density function} $ p_\tau(y \mid x) $, which gives the probability to find the system at a point $y$ after some \emph{lag time} $\tau$, given that it started in $x$ at time 0. 
Using $p_\tau$, we can express the evolution of arbitrary observables and densities under the dynamics using so-called \emph{transfer operators}.
Consider some observable $f \colon \inspace\to\mathbb{R}$ of the system, and the value $f(x)$ of this observable at some point $x\in\inspace$. Then the expected value of this observable of the system that started in $x$ and evolved to time $\tau$ is given by
\begin{equation}
    f_\tau(x) := \ko f(x) := \int p_\tau(y \mid x) \ts f(y) \ts \dd \mu(y).
\end{equation}
The operator $\ko$ is called the \emph{Koopman operator} of the system. In addition to the lag time $\tau$, it depends on the initial measure $\mu$, which in our case will be either the Lebesgue measure or the system's invariant measure (also known as Gibbs measure). This will lead to slightly different interpretations of the evolved observables (more on that later) and allow us to utilize data points sampled in different ways in the algorithms.

Similarly, we can describe the evolution of densities: Assume the initial distribution of the system is given by the density $u \colon \inspace\to \mathbb{R}$. Then the distribution of the evolved system at time $\tau$ is given by
\begin{equation*}
    u_\tau(x) := \pf u(x) := \int p_\tau(x \mid y) \ts u(y) \ts \dd \mu(y),
\end{equation*}
where $\pf$ is called the \emph{Perron--Frobenius operator} of the system.
In the special case of $X_\tau$ being a deterministic system, i.e., $X_{t+\tau} = \Theta(X_t)$ with some invertible flow map $\Theta\colon\inspace\rightarrow\inspace$, we get the simplified definitions
\begin{equation*}
    \ko f(x) = f\big(\Theta (x)\big) 
    \quad \text{and} \quad
    \pf u(x) = u\big(\Theta^{-1} (x)\big).
\end{equation*}


Even though describing the system's dynamics by operators may seem overly abstract compared to for example an SDE or the associated transport equations, this formalism has proven useful for metastability analysis.
In fact, the basis of a multitude of computational methods is the realization that all information about the long-term behavior of the dynamics is contained within certain eigenfunctions of the transfer operators.
Consider for instance the eigenproblem of the Koopman operator
\begin{equation}
\label{eq:operator KO eigenproblem}
\ko \varphi = \lambda \varphi,
\end{equation}
where the solution $\varphi$ is a function in a certain Hilbert space.
Under mild conditions, all eigenvalues $\lambda$ are real and bounded from above by 1. Crucially, eigenfunctions corresponding to eigenvalues \emph{close to 1} reveal the location of metastable sets in $\mathbb{X}$ \cite{SS13,DJ99,KNKWKSN18}. Analogue results hold for the Perron Frobenius eigenvalue problem
\begin{equation}
\label{eq:operator PF eigenproblem}
\pf \varphi = \lambda \varphi.
\end{equation}

\begin{example} \label{ex:4well_MSMexample}
We consider the process $X_t$ described by the overdamped Langevin equation, i.e., the stochastic differential equation (SDE)
\begin{equation*}
    \mathrm{d}\mathbf{X}_t = -\nabla V(\mathbf{X}_{t})\,\mathrm{d}t + \sqrt{2\beta^{-1}} \, \mathrm{d}\mathbf{W}_t
\end{equation*}
with the quadruple-well potential $ V(x) = (x_1^2 - 1)^2 + (x_2^2 - 1)^2 $ and the inverse temperature $ \beta = 4 $. Here, $\mathbf{W}_t$ describes a standard Wiener process (white noise). The potential on the domain $ \mathbb{X} = [-2, 2] \times [-2, 2] $ is illustrated in Figure~\ref{fig:4well_MSMexample} (a).

As demonstrated in Figure~\ref{fig:4well_MSMexample} (b)--(d), the location of the metastable sets in $\inspace$ can be identified by the sign structure of the dominant eigenfunctions. This can be done algorithmically by applying a \emph{spectral clustering} algorithm such as PCCA+ to the eigenfunctions. The end goal is the construction of a Markov State Model (MSM), in which the main metastable states form the states of a Markov chain\cite{SS13}.
In this example, this model would be entirely described by a $4\times 4$ transition rate matrix. This is a vastly reduced model compared to the original two-dimensinal SDE, and still preserve the statistics of the long-time metastable transitions.

This system will be revisited in Section~\ref{sec:Applications}, where it will be analyzed with the newly derived kernel-based methods.
Moreover, the application of PCCA+ to approximated eigenfunctions will be demonstrated in Sections~\ref{sec:AlanineDipeptide} and~\ref{sec:NTL9}.

\begin{figure*}[t]
    \centering
    \includegraphics[width=\textwidth]{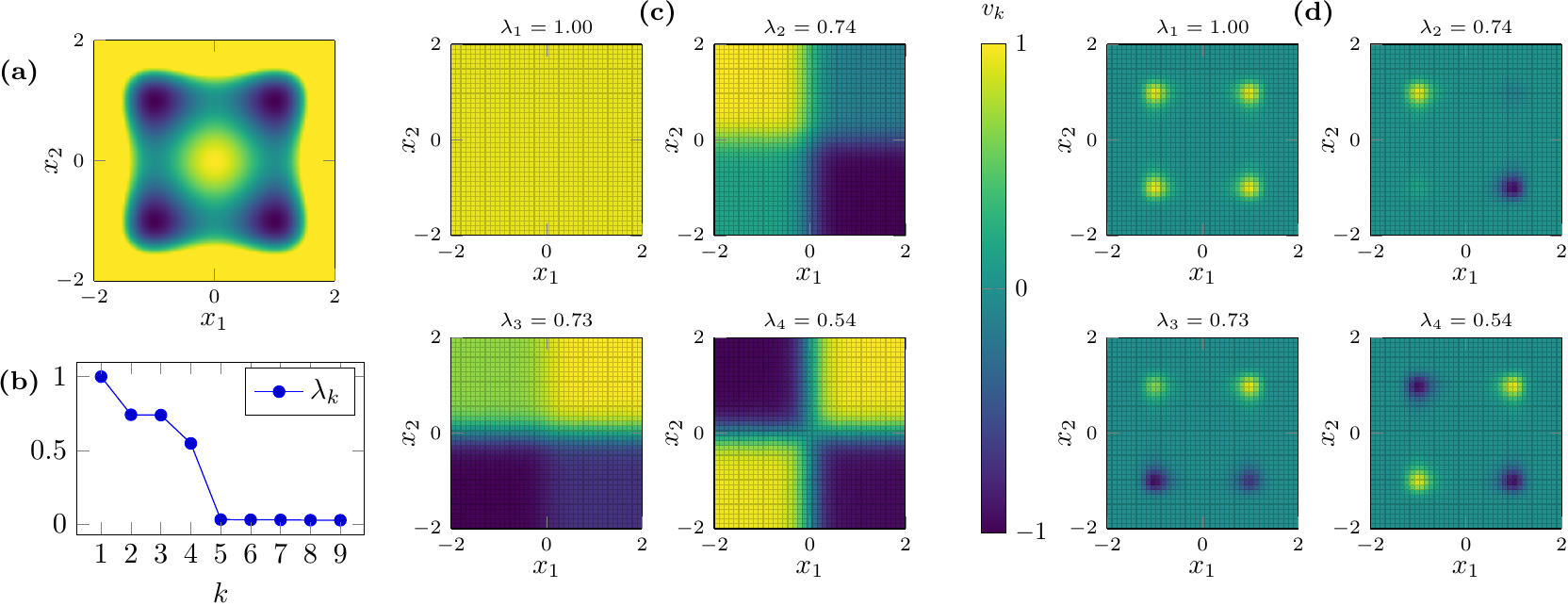}
    \caption{Transfer operator-based metastability analysis. (a) Potential $V$ with four local minima. (b) Largest eigenvalues of the discretized Koopman operator, independent of the choice of the measure $\mu$. A spectral gap after $\lambda_4$ indicates that the dominant four eigenfunctions are relevant to metastability analysis. (c) Dominant eigenfunctions of $\ko$ with $\mu$ chosen as the Lebesgue measure. We observe that the eigenfunctions are almost constant in the four quadrants. This indicates that each of the quadrants is metastable, i.e., trajectories that start in each quadrant typically stay in it for a long time. (d) Dominant eigenfunctions of $\ko$ with $\mu$ chosen as the Gibbs measure. Their sign structure reveals the ``cores'' of the metastable sets. Formally, they differ from the Lebesgue eigenfunctions only by a weighting with the invariant density.}
    \label{fig:4well_MSMexample}
\end{figure*}

\exampleSymbol
\end{example}

\subsubsection*{Galerkin approximation of transfer operators}

Formally, the operators $\ko$ and $\pf$ are defined on the function space $L_2(\mu)$, which is an infinite-dimensional Hilbert space with inner product $\langle \cdot , \cdot\rangle_2$.
The solution to the above eigenproblems are therefore again functions in $L_2(\mu)$. 
Thus, in order to be numerically solvable, the problems need to be \emph{discretized}, i.e., approximated by finite, vector-valued eigenproblems. We demonstrate this with the aid of the Koopman eigenproblem \eqref{eq:operator KO eigenproblem}. For \eqref{eq:operator PF eigenproblem}, the procedure works in a similar fashion.

Consider a set of $n$ linearly independent functions $\{v_1,\dots,v_n\}\subset L_2(\mu)$, and all the functions that can be expressed by linear combinations of the these functions:
\begin{equation*}
    \widetilde{\varphi}(x) = \sum_{\imath=1}^n c_\imath \ts v_\imath(x),\quad c \in \R^n.
\end{equation*}
Each such function is uniquely defined by its \emph{coefficient vector} $c$. Together, these functions form an $n$-dimensional subspace of $L_2(\mu)$, called the \emph{Galerkin space}, and denoted by $\mathcal{V}$ in what follows.

If $\mathcal{V}$ is a ``rich'' approximation space---particularly, if its dimension $n$ is high enough---it can reasonably be assumed that the eigenproblem \eqref{eq:operator KO eigenproblem} can be \emph{approximately} solved by an element $\widetilde{\varphi}$ from~$\mathcal{V}$:
\begin{equation} \label{eq:operator KO eigenproblem approx}
    \ko \widetilde{\varphi} \approx \lambda \widetilde{\varphi}.
\end{equation}
There now exist various ways to construct from \eqref{eq:operator KO eigenproblem approx} a well-defined vector--valued eigenproblem of the form
\begin{equation} \label{eq:operator KO eigenproblem discrete}
    Tc = \lambda c
\end{equation}
for the coefficient vector $c$ of $\widetilde{\varphi}$. Conceptually, however, the most important step for discretizing the eigenproblem has already been completed, namely by assuming that its (approximate) solution lies in the finite-dimensional function space $\mathcal{V}$. Thus, the greatest influence on the final approximation error is the choice of the finite basis $\{v_1,\ldots,v_n\}$.

For example, dividing $\inspace$ regularly into $n$ boxes, $\inspace = A_1\cup\ldots\cup A_n$, the characteristic functions
\begin{equation*}
\mathbbm{1}_{A_\imath}(x) = \begin{cases}
	1, & \text{if } x\in A_\imath, \\
	0, & \text{otherwise},
	\end{cases}
\end{equation*}
for $\imath=1,\dots,n$ can be used as Galerkin basis. In this case the matrix $T$ in~\eqref{eq:operator KO eigenproblem discrete} is the $n \times n$ \emph{transition matrix} between the boxes. Its elements $T_{\imath\jmath}$ describe the probability of the system to end up at in box $A_\jmath$ after starting in $A_\imath$, which can be estimated from simulation data.

Other commonly-used types of Galerkin basis functions include polynomials or trigonometric polynomials over $\inspace$ up to a certain degree. 

The size of the discretized eigenvalue problem equals the number of basis functions, determined for instance by the resolution of the box partition or the maximum degree of the polynomials. In particular, this approach suffers from the \emph{curse of dimensionality}, the phenomenon that for ansatz functions $v_\imath$ chosen on a regular grid, the effort grows exponentially with the dimension $d$. This renders the Galerkin approach prohibitively expensive for high-dimensional problems. 
We will see below how kernel methods---by \emph{implicitly} defining bases for the solution space---can mitigate these problems.

\begin{remark}
With the Galerkin approach the derivation of a vector-valued eigenvalue problem~\eqref{eq:operator KO eigenproblem discrete} is quite straightforward once an explicit basis of the Galerkin space is given. However, in order for the kernel-based approach with its implicit basis to show its advantages, we need to keep working with the original operator formulation of the eigenproblems \eqref{eq:operator KO eigenproblem} and \eqref{eq:operator PF eigenproblem} for now. This will also enable us to systematically separate approximation errors that arise from restricting the solution space from errors that are due to finite sampling of the dynamics. In the end, however, we will again obtain a data-based matrix approximation of the original eigenproblem.
\end{remark}

\subsection{Kernels and their approximation spaces}

Moving away from dynamical systems and their transfer operators for a moment, we introduce kernels and their reproducing kernel Hilbert spaces. This introduction is based on Refs.~\onlinecite{Schoe01, StCh08}.

Let $k \colon \inspace \times \inspace \to \mathbb{R}$ be a positive definite function, called the \emph{kernel}. Associated with each kernel is a unique space of functions $\mathbb{H}$, called the \emph{reproducing kernel Hilbert space (RKHS)}, that will serve as an analogue to the previously-introduced Galerkin space $\mathcal{V}$. However, the elements of $\mathbb{H}$ in general cannot be written by a finite combination of pre-defined basis functions. Instead, the defining property of $\mathbb{H}$ is that each $f\in\mathbb{H}$ can be linearly combined by evaluations of the kernel:
$$
f(x') = \sum_{\imath\in I} c_\imath k(x', x_\imath).
$$
It is important to note that the set of evaluation points $\{x_\imath\}_{\imath\in I}$, as well as the number of elements of the index set~$I$, depends on the function $f$. Thus, for each element of $\mathbb{H}$, an uncountable infinite number of ``basis functions'' $\{k(x,\cdot)\}_{x\in\inspace}$ are available to express it. It is therefore no surprise that the approximation space $\mathbb{H}$ can be incomparably ``richer'' than the Galerkin approximation space. However, how rich the space ultimately is depends on the choice of the kernel.

\begin{example}
\label{ex:kernels}
Examples of commonly-used kernels are:
\begin{enumerate}[label=(\roman*), wide, labelwidth=!, labelindent=0pt]
\item Polynomial kernel of degree $ p $: 
$$
k(x,x') = (x^\top x' + c)^p
$$ 
with $c>0$. It can be shown that in this case $\mathbb{H}$ is identical to the Galerkin space of polynomials over $\inspace$ up to degree~$p$.
\item Gaussian kernel: 
$$
k(x,x') = \exp\left(-\frac{1}{\sigma}\norm{x-x'}_2^2\right),
$$
where $ \sigma > 0 $ is a parameter. This important kernel induces an infinite-dimensional RKHS $\mathbb{H}$ that is dense in $L_2(\mu)$\cite{StCh08}, i.e., a very rich approximation space.
\item Given functions $ f_\imath \colon \inspace \to \R $, with $ \imath = 1, \dots, n $, then
$$ 
k(x,x') = \sum\limits_{\imath=1}^n f_\imath(x) f_\imath(x') 
$$
defines a kernel and $\mathbb{H}$ is the Galerkin space spanned by the basis $\{f_1,\dots,f_n\}$ \exampleSymbol
\end{enumerate}
\end{example}

A kernel can be interpreted as a similarity measure. In point (iii) in the example above, this similarity measure is just the standard Euclidean inner product between the $n$-dimensional \emph{features} $f(x)$ and $f(x')$ of $x$ and $x'$.

However, the kernel is typically not defined by a collection of explicitly given features. Rather, each kernel implicitly \emph{defines} a possibly infinite collection of features. Thus, by choosing the kernel as some similarity measure that is ``relevant'' to the molecular system at hand, one can formulate and solve the eigenvalue problem in the possibly infinite-dimensional feature space $\mathbb{H}$ with highly expressive features, \emph{without} having to explicitly construct, compute, and discretize those features.

For example, the kernel $k(x,x')$ may be based on comparing a large collection of unrelated but independently relevant observables at the points $x$ and $x'$, drawn from a database, or measure an appropriate distance between possibly high-dimensional or unwieldy representations of $x$ and $x'$, for which however distances can more easily be computed.
In Section~\ref{sec:NTL9}, we will see how to construct a kernel from a distance measure based on the contact map representation of proteins.
Moreover, as kernels do not necessarily require vectorial input data, a kernel based on the graph representation of a molecule can also be defined\cite{Borgwardt2005,Vishwanathan2010}.

\subsubsection*{Integral operators}

In order to formulate eigenproblems in $\mathbb{H}$ instead of the original space $L_2(\mu)$, we require the following transformation map:

\begin{equation*}
    \mathcal{S}_k f(x) = \int k(x, x^\prime) \ts f(x^\prime) \ts \dd \mu(x^\prime).
\end{equation*}

The operator $S_k \colon L_2(\mu)\rightarrow \mathbb{H}$ is called the \emph{integral operator}. Although it looks like a transfer operator, these concepts are unrelated---there is no dynamics in the definition of $ \mathcal{S}_k $. Applying this integral operator to a function can be viewed as a transformation, similar to a Fourier transform. It is the infinite-dimensional equivalent of a basis change.

For technical reasons, it is advantageous to think of the image space of $ \mathcal{S}_k $ not as $\mathbb{H}$ but again as $L_2(\mu)$. We therefore additionally define the \emph{embedding operator} $\ebd[k] \colon L_2(\mu) \to L_2(\mu)$
$$
	\mathcal{E}_kf = \operatorname{id}\big(\mathcal{S}_kf\big),
$$
where $\operatorname{id}\colon \mathbb{H} \to L_2(\mu)$ is the identification of a function with its equivalence class in $L_2(\mu)$. The relationship is illustrated in the following diagram:
\begin{minipage}{\linewidth}
    \centering
    \vspace*{1ex}
    \includegraphics[width=0.5\linewidth]{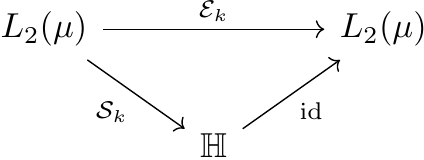} \\
    \vspace*{1ex}
\end{minipage}
Moreover, every function $ f \in \mathbb{H} $ is $ L_2(\mu) $-integrable and, in what follows, we will consider $ \mathbb{H} $ as a subset of $ L_2(\mu) $, although this is technically not entirely correct since $ \mathbb{H} $ contains functions and $ L_2(\mu) $ equivalence classes.

Now, instead of discretizing the transfer operator eigenvalue problem directly, we will first rewrite it in terms of \emph{transformed} eigenfunctions. A discretization of this reformulated eigenvalue problem will finally result in algorithms that require only evaluations of the kernel in given data points.

\section{Kernel-based discretization of eigenvalue problems}
\label{sec:Kernel-based discretization of eigenvalue problems}

In this section, we derive algorithms for the numerical solution of transfer operator eigenvalue problems in reproducing kernel Hilbert spaces that rely only on kernel evaluations in sampling data. Summarized, the main advantages of this approach are:
\begin{itemize}[wide, labelwidth=!, labelindent=0pt]
\item Depending on the chosen kernel, the approximation space $\mathbb{H}$ can in principle be infinite-dimensional and dense in $L_2(\mu)$, allowing for accurate approximations of transfer operators and their eigenfunctions.

\item Using a standard Galerkin approximation, the size of the eigenvalue problem depends on the number of basis functions and thus grows rapidly with increasing dimension $ d $ if we want to maintain a certain approximation error. The size of the eigenvalue problem associated with kernel-based methods, on the other hand, depends only on the number of data points, independently of the dimension $ d $. For increasing $ d $, however, we then have to take into account more data points to attain a certain approximation error.

\item Instead of explicitly constructing (high-dimensional) feature spaces, kernels based on chemically relevant distance measures can be used that implicitly work in these spaces.
\end{itemize}

The derivations of the new algorithms are surprisingly simple once we assume the following:
\begin{enumerate}[wide, labelwidth=!, labelindent=0pt]
\item The eigenfunctions $ \varphi $ are in $ \mathbb{H} $. That is, we consider transfer operators restricted to $ \mathbb{H} $. This is similar to the Galerkin approach, where we seek the solution to the eigenproblems only in a finite-dimensional Galerkin space $\mathcal{V}$, with the difference that the RKHS $\mathbb{H}$ can still be infinite-dimensional. Depending on the kernel, this can be a strong restriction, e.g., if $ k $ is a polynomial kernel of low order (whose RKHS is of low finite dimension), or a mild restriction, e.g., if $ k $ is the Gaussian kernel since the resulting RKHS is then infinite-dimensional and dense in $ L_2(\mu) $.
\item For any eigenfunction $ \varphi \in \mathbb{H} $, we can find a function $ \widetilde{\varphi} \in \mathbb{H} $ such that $ \varphi = \mathcal{E}_k \widetilde{\varphi} $. That is, $ \varphi $ is the embedding of $ \widetilde{\varphi} $ into the RKHS. This technical requirement can in fact be shown to always be fulfilled under mild conditions, see Lemma~\ref{lem:operator properties} in the appendix.
\end{enumerate}

\subsection{Koopman operator for deterministic systems}
\label{sec:kEDMD_KO}

We initially consider deterministic dynamical systems, i.e., $ \ko f(x) = f(\Theta(x)) $. In order to obtain kernel-based methods, we will first rewrite the eigenvalue problem \eqref{eq:operator KO eigenproblem} based on the above assumptions. We now want to find functions $ \widetilde{\varphi} \in \mathbb{H} $ such that
\begin{equation}
	\label{eq:Koopman EVP in H}
    \ko \ts \mathcal{E}_k \ts \widetilde{\varphi} = \lambda \ts \mathcal{E}_k \ts \widetilde{\varphi}
\end{equation}
and then set $ \varphi = \mathcal{E}_k \ts \widetilde{\varphi} $ to obtain an eigenfunction. Using the definition of the Koopman operator for deterministic systems we get for the left hand side
\begin{align*}
	\ko \mathcal{E}_k \ts \widetilde{\varphi} (x) &= \ko \int k(x, x^\prime) \ts \widetilde{\varphi}(x^\prime) \ts \dd \mu(x^\prime) \\
	&= \int k(\Theta(x), x^\prime) \ts \widetilde{\varphi}(x^\prime) \ts \dd \mu(x^\prime)~.
\end{align*}
With this the eigenproblem~\eqref{eq:Koopman EVP in H} becomes
\begin{equation} \label{eq:modified Koopman EVP}
       \int k(\Theta(x), x^\prime) \ts \widetilde{\varphi}(x^\prime) \ts \dd \mu(x^\prime) = \lambda \int k(x, x^\prime) \ts \widetilde{\varphi}(x^\prime) \ts \dd \mu(x^\prime),
\end{equation}
which has to be satisfied for all points $x\in\inspace$.

We assume now that we have training data $ (x_i, \, y_i) $, $ i = 1, \, \dots, \, m $, where the $x_i$ are $\mu$-distributed and $ y_i = \Theta(x_i) $. Typically, the data is given either by many short simulations or few long simulations of the system. 
In a first approximation step, we now relax the transformed eigenproblem~\eqref{eq:modified Koopman EVP}, by requiring that it only needs to be satisfied in the data points:
\begin{equation*}
    \int k(y_i, x^\prime) \ts \widetilde{\varphi}(x^\prime) \ts \dd \mu(x^\prime)
        = \lambda \int k(x_i, x^\prime) \ts \widetilde{\varphi}(x^\prime) \ts \dd \mu(x^\prime),
\end{equation*}
for $i=1,\ldots,m$. Finally, in a second approximation step, we estimate the integrals also from the training data by Monte Carlo quadrature, i.e.,
\begin{equation*}
    \sum_{j=1}^m k(y_i, x_j) \ts \widetilde{\varphi}(x_j) = \lambda \sum_{j=1}^m k(x_i, x_j) \ts \widetilde{\varphi}(x_j),
\end{equation*}
for $i=1,\ldots,m$. With the Gram matrix $ \gram[XX] $ and time-lagged Gram matrix $ \gram[YX] $ defined by $ [\gram[XX]]_{ij} = k(x_i, x_j) $ and $ [\gram[YX]]_{ij} = k(y_i, x_j) $, this can be written in matrix form as
\begin{equation*}
    \gram[YX] \ts \widetilde{\varphi}_X = \lambda \ts \gram[XX] \ts \widetilde{\varphi}_X,
\end{equation*}
where $ \widetilde{\varphi}_X = [\widetilde{\varphi}(x_1), \, \dots, \, \widetilde{\varphi}(x_m)]^\top $. This matrix eigenvalue problem can now be solved numerically.

Discretizing a function-valued equation by requiring it to hold only at specific evaluation points is called a \emph{collocation method} in the numerical analysis of differential equations.
To again obtain a solution $\varphi$ that is defined in all of $\inspace$, we also replace the integral $ \varphi = \mathcal{E}_k \widetilde{\varphi} $ by its estimate
\begin{equation*}
    \varphi = \sum_{i=1}^m k(\cdot, x_i) \ts \widetilde{\varphi}(x_i) = \Phi \ts \widetilde{\varphi}.
\end{equation*}
Here, $ \Phi = [k(\cdot, x_1), \, \dots, \, k(\cdot, x_m)] $ can be seen as a row-vector of functions. In the machine learning community, $ \Phi $ is also called feature matrix, although it is technically not a matrix.

\begin{mdframed}[backgroundcolor=boxback,hidealllines=true]
\begin{textalgorithm}
\label{alg:KO_deterministic}
The proposed algorithm for approximating eigenfunctions of the Koopman operator can be summarized as:
\begin{enumerate}
\item Select a kernel $ k $ and compute the Gram matrices $ \gram[XX] $ and $ \gram[YX] $.
\item Solve the eigenvalue problem $ \gram[XX]^{-1} \ts \gram[YX] \ts \widetilde{\varphi} = \lambda \ts \widetilde{\varphi} $.
\item Set $ \varphi = \Phi \ts \widetilde{\varphi} $ as the approximation to the original eigenproblem \eqref{eq:operator KO eigenproblem}.
\end{enumerate}
\vspace{1pt}
\end{textalgorithm}
\end{mdframed}

If one is only interested in $\varphi$ evaluated at the data points, i.e., $ \varphi_X = [\varphi(x_1), \dots, \varphi(x_m)]^\top $, it can be obtained by $ \varphi_X = \gram[XX] \ts \widetilde{\varphi} $ since $ \Phi(x_i) = [k(x_i, x_1), \dots, k(x_i, x_m)] $ is the $ i $th row of $ \gram[XX] $.

\begin{remark}
The matrix $ \gram[XX] $ might be singular or close to singular so that the inverse $ \gram[XX]^{-1} $ does not exist or is ill-conditioned. (For the Gaussian kernel, $ \gram[XX] $ is guaranteed to be regular provided that all points $ x_1, \dots, x_m $, are distinct, but it might still be ill-conditioned.) Thus, for reasons of numerical stability, the above eigenproblem is often replaced by its regularized version
\begin{equation} \label{eq:regularized eigenproblem}
    (\gram[XX] + \eta \ts I) ^{-1} \ts \gram[YX] \ts \widetilde{\varphi} = \lambda \ts \widetilde{\varphi}.
\end{equation}
This corresponds to Tikhonov regularization and reduces overfitting, see Ref.~\onlinecite{Schoe01}.
\end{remark}

Algorithm~\ref{alg:KO_deterministic} is identical to the kernel EDMD formulation proposed in Ref.~\onlinecite{WRK15}, derived for explicitly given finite-dimensional feature spaces by \emph{kernelizing} conventional EDMD. In contrast, we directly derived a solution to the eigenproblem in the RKHS $\mathbb{H}$ and its approximation from data. Additionally, a derivation based on empirical estimates of kernel covariance and cross-covariance operators is described in Ref.~\onlinecite{KSM17}. More details about the feature space and relationships with the aforementioned methods can be found in Appendix~\ref{app:Relationships with other methods}.

\subsection{Perron--Frobenius operator for deterministic systems}
\label{sec:kEDMD_PF}

A collocation-based approximation of the Perron--Frobenius eigenproblem \eqref{eq:operator PF eigenproblem} for deterministic systems can be derived in a similar fashion. It can be shown that
\begin{equation} \label{eq:P to K}
    \mathcal{E}_k \pf f(x) = \int k(x, \Theta(x^\prime)) \ts f(x^\prime) \ts d\mu(x^\prime).
\end{equation}
We again set $ \varphi = \mathcal{E}_k \ts \widetilde{\varphi} $, which results in $ \pf \mathcal{E}_k \widetilde{\varphi} = \lambda \ts \mathcal{E}_k \widetilde{\varphi} $. In order for this equation to be satisfied, the expression on the left must be a function in $ \mathbb{H} $ since we assume $ \varphi $ to be in~$ \mathbb{H} $. Applying $ \mathcal{E}_k $ to both sides leads to transformed functions in $ \mathbb{H} $, we are basically just changing the coefficients of the series expansion, see Lemma~\ref{lem:operator properties}. This results in the following transformed, yet equivalent eigenvalue problem: We seek functions $ \widetilde{\varphi} \in \mathbb{H} $ such that
\begin{equation*}
    \mathcal{E}_k \pf \mathcal{E}_k \widetilde{\varphi} = \lambda \ts \mathcal{E}_k \mathcal{E}_k \widetilde{\varphi}
\end{equation*}
and set $ \varphi = \mathcal{E}_k \widetilde{\varphi} $. Exploiting \eqref{eq:P to K}, where $ f = \mathcal{E}_k \ts \widetilde{\varphi} $, we can rewrite this as
\begin{align*}
    \ebd[k] \ts \pf \ts \mathcal{E}_k \ts \widetilde{\varphi}
        &= \int k(\cdot, y^\prime) \int k(x^\prime, x^{\prime \prime}) \ts \widetilde{\varphi}(x^{\prime \prime}) \ts \dd \mu(x^{\prime \prime}) \ts \dd \mu(x^\prime) \\
        &\overset{!}{=} \lambda \int k(\cdot, x^\prime) \int k(x^\prime, x^{\prime \prime}) \ts \widetilde{\varphi}(x^{\prime \prime}) \ts \dd \mu(x^{\prime \prime}) \ts \dd \mu(x^\prime),
\end{align*}
which we discretize in the same way as above---collocation in $ x_i $, $ i = 1, \, \dots, \, m $, and Monte Carlo approximation of the integrals---and obtain the discrete eigenvalue problem
\begin{equation} \label{eq:EVP PF}
    \gram[XY] \ts \gram[XX] \ts \widetilde{\varphi}_X \overset{!}{=} \lambda \ts \gram[XX] \ts \gram[XX] \widetilde{\varphi}_X.
\end{equation}
Here, $ \gram[XY] = \gram[YX]^\top $. From this, an eigenfunction approximation can again be obtained by setting $ \varphi = \Phi \widetilde{\varphi}_X $. Assuming that $ \gram[XX] $ is regular, this can be simplified by substituting $ \widehat{\varphi}_X := \gram[XX] \widetilde{\varphi}_X $.

\begin{mdframed}[backgroundcolor=boxback,hidealllines=true]
\begin{textalgorithm}
\label{alg:PF_deterministic}
An eigenfunction of the Perron--Frobenius operator can be approximated as follows:
\begin{enumerate}
\item Select a kernel $ k $ and compute the Gram matrices $ \gram[XX] $ and $ \gram[XY] $.
\item Solve the eigenvalue problem $ \gram[XX]^{-1} \ts \gram[XY] \widehat{\varphi}_X = \lambda \widehat{\varphi}_X $.
\item Set $ \varphi = \Phi \ts \gram[XX]^{-1} \ts \widehat{\varphi}_X $ as the approximation to the original eigenproblem \eqref{eq:operator PF eigenproblem}.
\end{enumerate}
\vspace{1pt}
\end{textalgorithm}
\end{mdframed}

If we are only interested in the eigenfunctions evaluated at the training data points, denoted again by $ \varphi_X $, this leads to $ \varphi_X = \widehat{\varphi}_X $ since the $ \Phi $ evaluated in all points $ x_i $ results in $ \gram[XX] $ as shown above. This is consistent with the algorithm for computing eigenfunctions of the Perron--Frobenius operator proposed in Ref.~\onlinecite{KSM17}. To avoid overfitting, $ \gram[XX]^{-1} $ is typically regularized as before.

\subsection{Extension to stochastic systems}
\label{sec:kEDMD_stochastic}

In practice, the methods introduced above can be applied to stochastic dynamical systems as well, by heuristically treating data pairs $(x_i,y_i)$ that are in fact realizations of a stochastic dynamics as generated by a deterministic system. In that situation, however, a large number of data pairs are required in order to compensate for the variance in the stochastic dynamics. The number $m$ of data points that can be taken into account is however limited due to memory constraints given by the $m\times m$ Gram matrices. To address this, we derive \emph{outcome-averaged} Gram matrices, denoted by $ \agram[YX] $ and $\agram[XY]$, that take into account the stochasticity of the system in a more efficient way.

For the stochastic case, the Koopman operator applied to the kernel takes the form
\begin{equation*}
    \big[\ko k(\cdot, x')\big](x) = \int k(y,x) \ts p_\tau(y~|~x')~d\mu(y).
\end{equation*}
The integral can be approximated by Monte Carlo quadrature
\begin{equation}
\label{eq:KO_MonteCarloApprox}
    \big[\ko k(\cdot, x')\big](x) \approx \frac{1}{M} \sum_{l=1}^M k(y^{(l)}, x'),
\end{equation}
where the $ y^{(l)} $ are endpoints of $ M $ independent realizations of the dynamics all starting in $ x $. 

As in Section~\ref{sec:kEDMD_KO}, we discretize \eqref{eq:Koopman EVP in H} by Monte Carlo quadrature and demand it to hold only in $m$ data points $x_i, i=1,\ldots,m$, to obtain
\begin{equation*}
    \sum_{j=1}^m \big[\mathcal{K}k(\cdot,x_j)\big](x_i) \ts \widetilde{\varphi}(x_j)
        \overset{!}{=} \lambda \sum_{j=1}^m k(x_i,x_j) \widetilde{\varphi}(x_j)~.
\end{equation*}
Inserting the new approximation~\eqref{eq:KO_MonteCarloApprox} of the stochastic Koopman operator leads to
\begin{equation*}
    \sum_{j=1}^m \left[\frac{1}{M}\sum_{l=1}^M k\big(y_i^{(l)},x_j\big)\right] \ts \widetilde{\varphi}(x_j)
        \overset{!}{=} \lambda \sum_{j=1}^m k(x_i,x_j) \ts \widetilde{\varphi}(x_j),
\end{equation*}
where the $ y_{i}^{(l)},l=1,\ldots,M $, are independent realizations of the dynamics starting in $ x_i $. With the $m\times m$ matrix
\begin{equation}
\label{eq:G_YX_averaged}
    \big[\agram[YX]\big]_{ij} := \frac{1}{M} \sum_{l=1}^M k\big(y_{i}^{(l)}, x_j\big),
\end{equation}
called the \emph{averaged Gram matrix}, this can be written in matrix form as
\begin{equation*}
    \agram[YX] \widetilde{\varphi}_X \overset{!}{=} \lambda \ts \gram[XX] \widetilde{\varphi}_X.
\end{equation*}

For the Perron--Frobenius operator, the procedure can be repeated in a similar fashion, which then leads to the matrix eigenproblem
\begin{equation*}
    \agram[XY] \widetilde{\varphi}_X \overset{!}{=} \lambda \ts \gram[XX] \widetilde{\varphi}_X,
\end{equation*}
with the averaged Gram matrix $ \agram[XY] = \agram[YX]^\top $. Thus, we can again use the Algorithms~\ref{alg:KO_deterministic} and \ref{alg:PF_deterministic} defined above, with the only difference that we replace the time-lagged Gram matrices by the corresponding averaged counterparts. If we choose $ M = 1 $, we obtain the standard algorithms as a special case, also for stochastic systems.

\subsection{Approximation from time-series data}
\label{sec:timeseriesdata}

In order to compute the averaged Gram matrices \eqref{eq:G_YX_averaged}, many trajectories of length $\tau$ per test point $ x_i $ must be generated, which is straightforward for simple systems, but might be cumbersome or infeasible for complex molecular systems where the data is in general only given in form of a long trajectory. In this situation, multiple realization per starting point $ x_i $ are not available.
We thus propose a simple heuristic method to utilize such trajectory data. The practicality of the method compared to ``standard'' kernel EDMD will be evaluated experimentally in Sections \ref{sec:Quadwell} and \ref{sec:AlanineDipeptide}.

Let---in addition to the test points $\{x_1,\dots,x_m\}$---a much larger data set $\{\tilde{x}_1, \dots,\tilde{x}_M\}$ and its time-$\tau$-evolution $\{\tilde{y}_1,\ldots,\tilde{y}_R\}$ be given. Typically, these data sets come from a long equilibrated trajectory. Define the \emph{trajectory-averaged} Gram matrix $\tagram[YX]$ by
\begin{equation}
\label{eq:G_XY_trajaveraged}
    \big[\tagram[YX]\big]_{ij}
        := \sum_{l=0}^M k(\tilde{y}_l, x_j) \cdot w(\tilde{x}_l,x_i),
\end{equation}
with a certain \emph{weight function} $w(\cdot,\cdot)$. That way, not only trajectories that start exactly at $x_j$, i.e., data pairs $(x_i,y_i)$, contribute to the $ij$th entry of the Gram matrix, but also data pairs $(\tilde{x}_l,\tilde{y}_l)$ where $\tilde{x}_l$ lies near $x_i$. What constitutes ``near'' is defined by the weight function $w$, which should decrease with increasing distance $\|x_i-\tilde{x}_l\|$ and take its maximum for $\tilde{x}_l = x_i$.
Throughout the remainder of the paper, we will use the Gaussian weight function
$$
	w(\tilde{x}_l,x_i) = \frac{1}{Z_i} \exp\left(-\frac{\|x_i-\tilde{x}_l\|^2}{\varepsilon} \right),
$$
where $Z_i = \sum_{k=0}^R w(\tilde{x}_k, x_i)$ is a normalization factor. The bandwidth $\varepsilon$---similar to the parameter $\sigma$ for the Gaussian kernel in Example~\ref{ex:kernels}---controls the influence of surrounding points and has to be chosen problem-dependently. Too large values result in a high contribution from dynamically unrelated points far away, whereas too small values may under-utilize valuable dynamical information from points close by, decreasing the efficiency of the method.

Also note that this rather ad-hoc approach requires the computation of the pairwise distances between the test points $x_i$ and trajectory points $\tilde{x}_l$, which may substantially increase the computational requirement compared to the standard algorithm.
The usage of a cutoff radius or a locally-supported weight functions would help mitigate this.

\subsection{Additional results}

For the derivations above, we assumed that the eigenfunctions are elements of the RKHS $ \mathbb{H} $, which is in general not the case. For a polynomial kernel, for instance, the assumption would imply that all eigenfunctions can be written as polynomials, which is clearly not valid and introduces an approximation error. In Appendix~\ref{app:Kernel transfer operators}, we introduce the notion of \emph{kernel transfer operators}, which can be regarded as approximations of transfer operators in the respective RKHS. Existing methods such as EDMD and VAC and their kernel-based counterparts can be easily interpreted as discretizations of kernel transfer operators. This is described in Appendices~\ref{app:EDMD and VAC} and \ref{app:Kernel EDMD and kernel TICA}, respectively. Furthermore, kernel transfer operators allow for a detailed error analysis with respect to the choice of the kernel, which, however, is beyond the scope of this work.

\section{Applications}
\label{sec:Applications}

We will now demonstrate the application of the introduced algorithms to molecular dynamics problems. Similar examples, analyzed with the aid of related data-driven methods, can be found in Refs.~\onlinecite{SP13, NKPMN14, SP15, KNKWKSN18, Noe_VAMPnets2018}, for instance.

We will refer to the algorithms derived in Sections~\ref{sec:kEDMD_KO} and \ref{sec:kEDMD_PF} as \emph{kernel EDMD for the Koopman-} and \emph{Perron-Frobenius operator}, respectively, the stochastic extension proposed in Section \ref{sec:kEDMD_stochastic} as \emph{stochastic kernel EDMD} and the trajectory-based extension from Section \ref{sec:timeseriesdata} as \emph{trajectory-averaged kernel EDMD}.

\subsection{Quadruple-well problem}
\label{sec:Quadwell}

Here we re-visit the system from Example~\ref{ex:4well_MSMexample}. We want to compute the eigenfunctions of the Perron--Frobenius operator associated with this system using the various methods derived in the previous section.

\subsubsection*{Uniform distribution of test points}

First, we choose a $ 50 \times 50 $ box discretization of $ \mathbb{X} $ and use the centers of the boxes as our test points. That is, $ X $ is sampled from a uniform distribution which corresponds to computing eigenfunctions of the Perron--Frobenius operator $ \pf $ with respect to the Lebesgue measure. As a side remark, this is equivalent to computing eigenfunctions of the Koopman operator $\ko$ with respect to the Gibbs measure, so the results in this section should be compared to the eigenfunctions in Figure~\ref{fig:4well_MSMexample}~(d).
We select the lag time $ \tau = 10 $ and choose the Gaussian kernel $ k(x, y) = \exp\left(- \frac{\norm{x-y}^2}{\sigma}\right) $ with $ \sigma = 0.1 $ and regularization parameter $ \eta = 0.05 $ (see below). 

We first apply the standard Algorithm \ref{alg:PF_deterministic}, i.e., assemble the Gram matrix $ \gram[XY]$ using only one evolved point $y_i=\Theta(x_i)$ for each test point $x_i$.
The resulting eigenvalues and normalized eigenfunctions (computed at all grid points and interpolated in between) are shown in Figure~\ref{fig:Quadruple well PF} (a). The eigenfunctions clearly reveal the expected four metastable sets. Between the fourth and fifth eigenvalue, there is a spectral gap.

\begin{figure*}[htb]
    \centering
    \includegraphics[width=\textwidth]{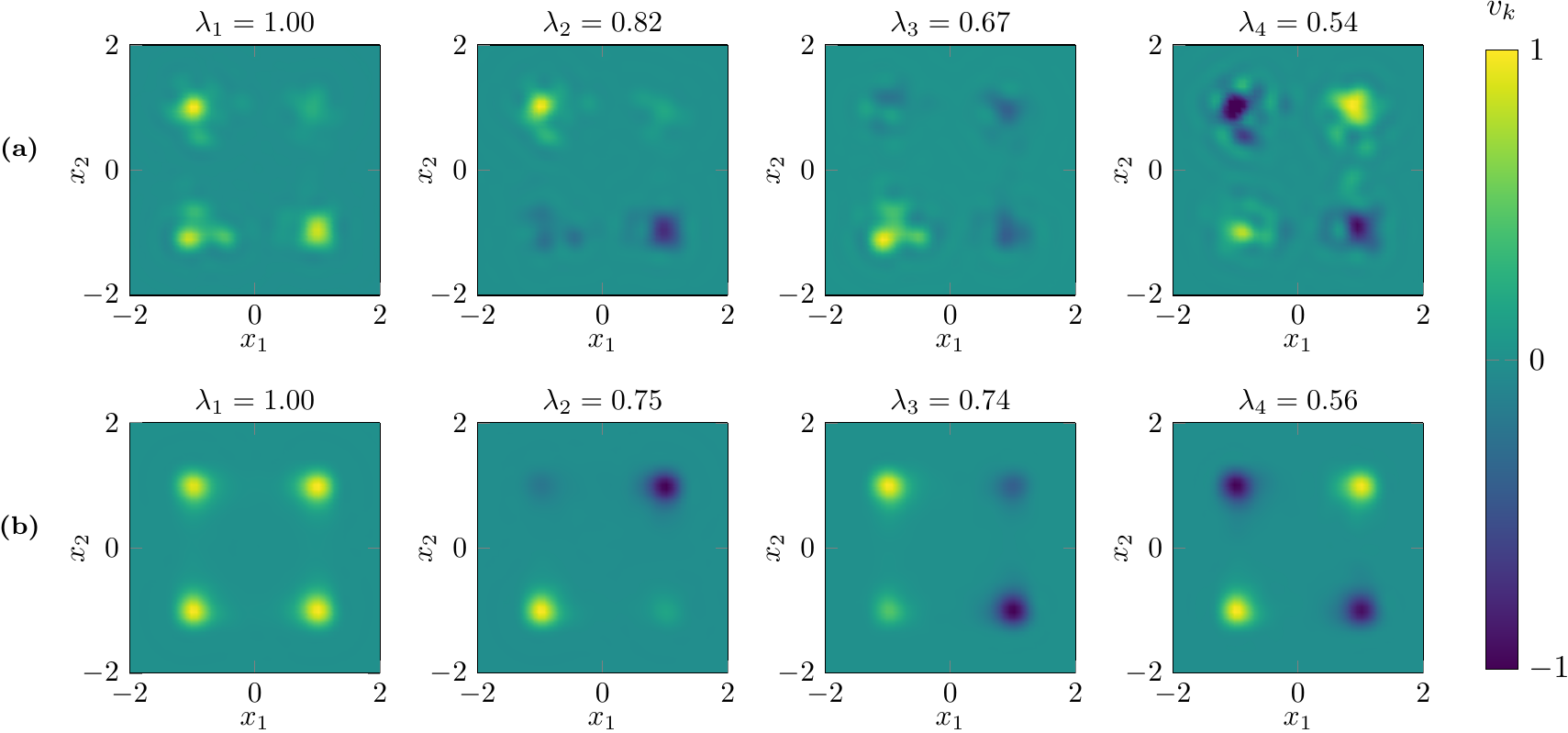}
    \caption{(a) First four eigenfunctions associated with the quadruple-well problem computed using Algorithm \ref{alg:PF_deterministic} where the data points $ x_i $ lie on a regular grid, i.e., are uniformly distributed in $[-2,2]\times[-2,2]$. (b) The same four eigenfunctions, only using the averaged Gram matrix $\agram[XY]$ in Algorithm~\ref{alg:PF_deterministic} with $M=100$ realizations.}
    \label{fig:Quadruple well PF}
\end{figure*}

However, recalling that the analytic eigenfunction $\varphi_1$ is supposed to be identical to the invariant density $\rho\sim e^{-\beta V}$, it is clear that the approximation quality is quite low. We thus replace the standard Gram matrix $ \gram[XY] $ by the averaged Gram matrix $ \agram[XY] $ with $M=100$ realizations of the dynamics per test point. The resulting eigenfunctions can be seen in Figure~\ref{fig:Quadruple well PF} (b).
The invariant density $\rho$ is visually indistinguishable from $\varphi_1$. The dependency of the $L^1$-error, $\|\rho - \varphi_1\|_{L^1}$, on $M$ is shown in Figure~\ref{fig:Quadruple well error M}. The $\mathcal{O}(\frac{1}{\sqrt{M}})$-descend is explained by the better Monte-Carlo approximation of the expectation value \eqref{eq:KO_MonteCarloApprox}.

\begin{figure}[htb]
    \centering
    \begin{minipage}{0.4\textwidth}
        \centering
        \includegraphics[height=0.165\textheight]{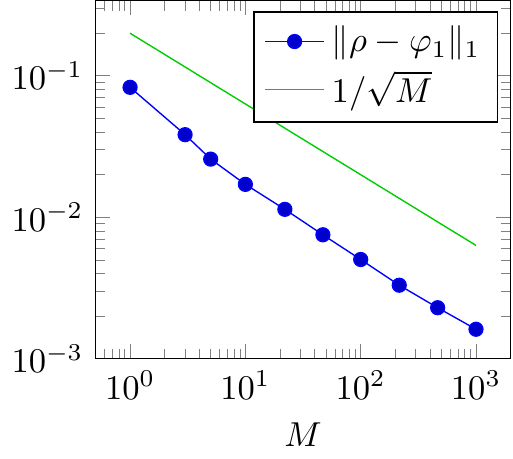}
    \end{minipage}
    \caption{Influence of the number of realizations $M$ on the average deviation between analytically computed invariant density and the dominant Koopman eigenfunction, computed with the stochastic kernel EDMD method for uniformly distributed test points.}
    \label{fig:Quadruple well error M}
\end{figure}

\subsubsection*{Test points from one long trajectory}

In the same way, the eigenfunctions of the Perron--Frobenius operator with respect to the invariant density can be estimated using Algorithm \ref{alg:PF_deterministic}. To this end, we compute one long trajectory and downsample the data to obtain the same lag time as before resulting in a data set containing again 2500 data points. The results are shown in Figure~\ref{fig:Quadruple well TO}~(a). The difference is that the eigenfunctions are not weighted by the equilibrium density anymore as it was the case in Figure~\ref{fig:Quadruple well PF}. Now, the eigenfunctions are almost constant within the four quadrants with sharp transitions between the positive and negative regions. Note that the eigenfunctions of the Perron--Frobenius operator with respect to the Gibbs measure and the eigenfunctions of the Koopman operator with respect to the Lebesgue measure are identical, so Figure~\ref{fig:Quadruple well TO} should be compared to Figure~\ref{fig:4well_MSMexample}~(c).

\begin{figure*}[htb]
    \centering
    \includegraphics[width=\textwidth]{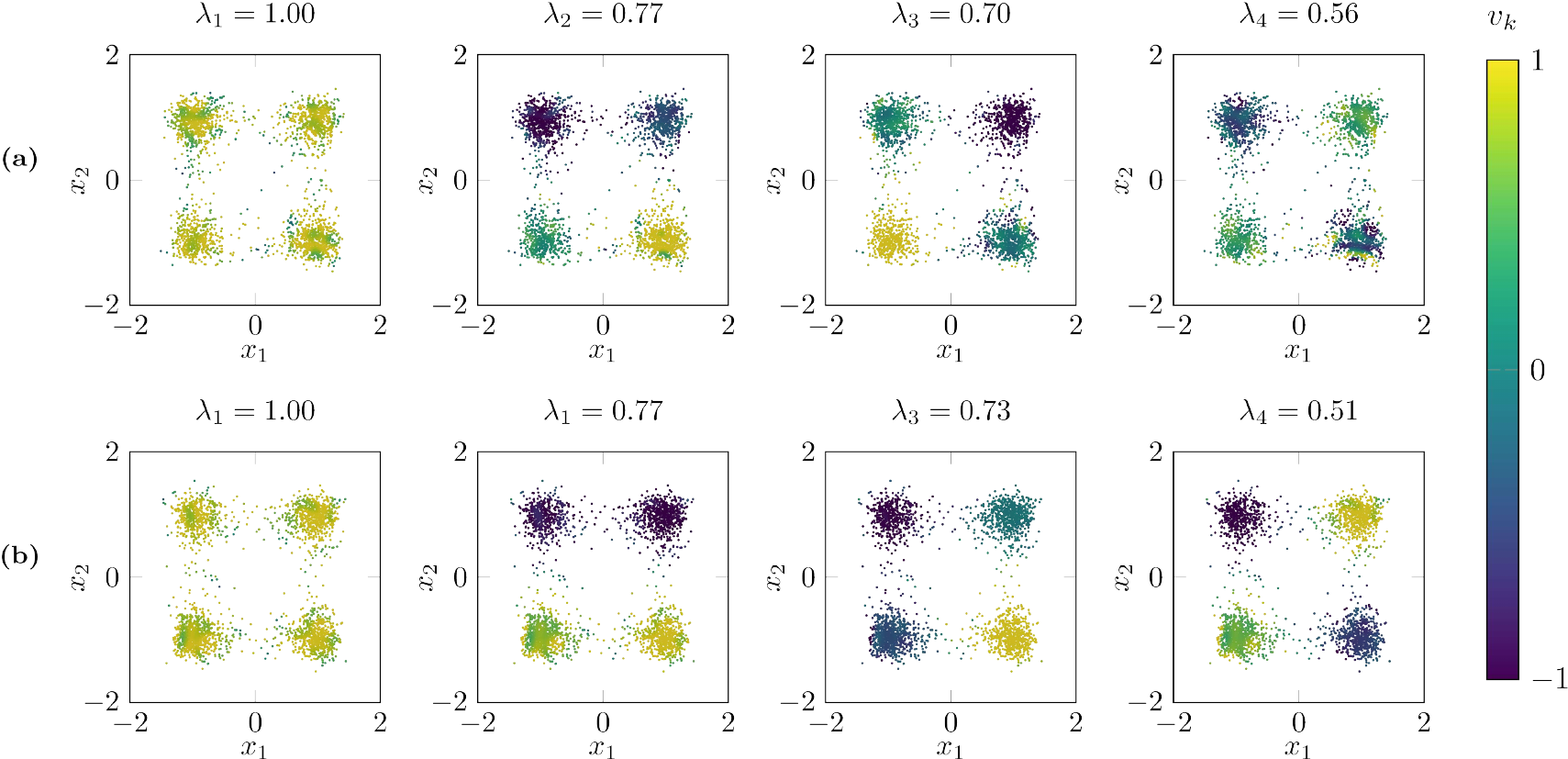}
    \caption{(a) First four eigenfunctions associated with the quadruple-well problem computed using Algorithm \ref{alg:PF_deterministic} where the data points $ x_i $ were extracted from one long trajectory. Thus, the eigenfunctions approximate those of the operator $ \mathcal{P}_\tau $ with respect to the Gibbs measure. (b) The same four eigenfunctions, only using the averaged Gram matrix $\agram[XY]$ in Algorithm~\ref{alg:PF_deterministic} with $M=100$ realizations per test point.}
    \label{fig:Quadruple well TO}
\end{figure*}

While the approximation quality appears to be adequate (the metastable sets are clearly distinguishable by the sign structure of the $\varphi_i$), we observe some noise in the supposedly constant quadrants. Hence, we attempt to reduce this noise by using more dynamic information about the stochastic process instead of just the 2500 data pairs $(x_i,y_i)$. We thus compute the trajectory-averaged Gram matrix \eqref{eq:G_XY_trajaveraged} using a long trajectory of $2.5\cdot 10^5$ frames. The distance parameter in the weight function was chosen as $\varepsilon=0.1$. The result is shown in Figure~\ref{fig:Quadruple well TO}~(b). We observe a small improvement of the results. The dependency of the $L_1$-deviation of the first eigenfunction from the constant function on the trajectory length is shown in in Figure \ref{fig:Quadruple well trajerror}. The stagnation of the error can be delayed by increasing the number of test points.

\begin{figure}[htb]
    \centering
    \centering
    \begin{minipage}{0.4\textwidth}
        \centering
        \includegraphics[height=0.165\textheight]{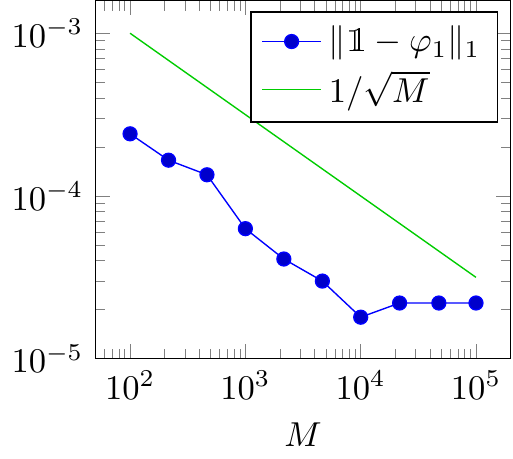}
    \end{minipage}
    \caption{Influence of the trajectory length $M$ on the average deviation between the constant function and the dominant Koopman eigenfunction, computed with the trajectory-averaged kernel EDMD method for Gibbs-distributed test points.}
    \label{fig:Quadruple well trajerror}
\end{figure}

\subsection{Alanine dipeptide}
\label{sec:AlanineDipeptide}

We now use kernel EDMD to analyze the global dynamics of the alanine dipeptide, see Figure~\ref{fig:Alanine dipeptide} (a). It is well known that the global conformational shape of the peptide can be described by the values of only two dihedral angles located in the central backbone \cite{}. A Ramachandran plot, Figure~\ref{fig:Alanine dipeptide} (b), identifies four combinations of these angles that are metastable.

\begin{figure}
	\centering
	\includegraphics[width=\linewidth]{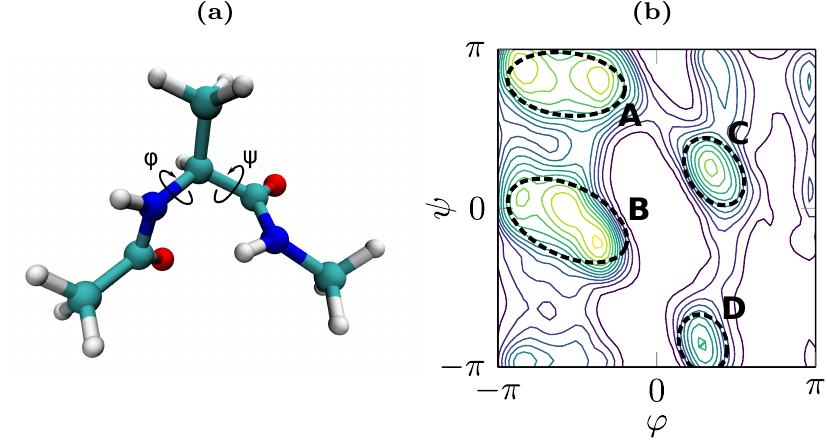}
    \caption{Alanine dipeptide. (a) Structure of the molecule with its two essential dihedral angles $\varphi$ and $\psi$. (b) Ramachandran plot of $(\varphi,\psi)$, revealing four metastable sets.}
    \label{fig:Alanine dipeptide}
\end{figure}

The molecule consists of $22$ atoms, including hydrogen. A global analysis of the corresponding $66$-dimensional systems by conventional grid-based methods such as EDMD is thus out of the question. Instead, we apply the kernel EDMD method for the transfer operator to make the effort conditional on the number of snapshots instead of the number of basis functions. The number of test points was chosen to be $m=4000$, with starting points $x_i$ chosen from a 40 ns long trajectory that was generated with the Gromacs molecular dynamics software\cite{Gromacs}. Thus, the sampling measure $\mu$ is the Gibbs measure. The lag time was defined to be $\tau=20$ ps. In order to make best use of the remaining trajectory data, we employ the trajectory-averaged kernel EDMD method detailed in Section \ref{sec:timeseriesdata} on a $2\cdot 10^5$ frame subsample of the trajectory to assemble the Gram matrix.

We again use the Gaussian kernel. An appropriate choice of the parameter values of $\sigma$ (kernel bandwidth) and $\varepsilon$ (distance parameter in trajectory averaging) is critical for the performance of the method.
While there have been investigations on the optimal choice of the kernel bandwidth\cite{Singer2006}, we are unaware of good strategies for choosing $\varepsilon$ a priori. We thus experimentally examine the influence of the two parameters in more detail. To this end, we post-process for different $(\sigma,\varepsilon)$ combinations the computed eigenvectors with the PCCA+-algorithm\cite{DeuWe05,Roeblitz2013,Web18} in order to find four maximally metastable \emph{membership functions}, i.e., functions $\chi_i:\mathbb{X}\rightarrow [0,1],~i=1,\ldots,4$, that assign each test point $x_i$ its probability to belong to one of four metastable states. This is also called a \emph{fuzzy clustering} of the test points.
For an in-depth introduction to PCCA+ and the relation between membership functions and transfer operator eigenfunctions, we refer to Ref.~\onlinecite{Roeblitz2013}.

Good membership functions are ``unambiguous'', that is, each membership function assigns each test point a membership value close to either zero or one. The objective function
$$
\mathcal{I}[\chi] := \sum_{i=1}^4 \sum_{l=1}^m \chi_i(x_l)^2,
$$
in what follows called the \emph{crispness} of the fuzzy clustering, reflects that and can be thus used as a quality metric of the membership functions and their originating eigenfunctions.

\begin{figure}[htb]
\includegraphics[scale=.7]{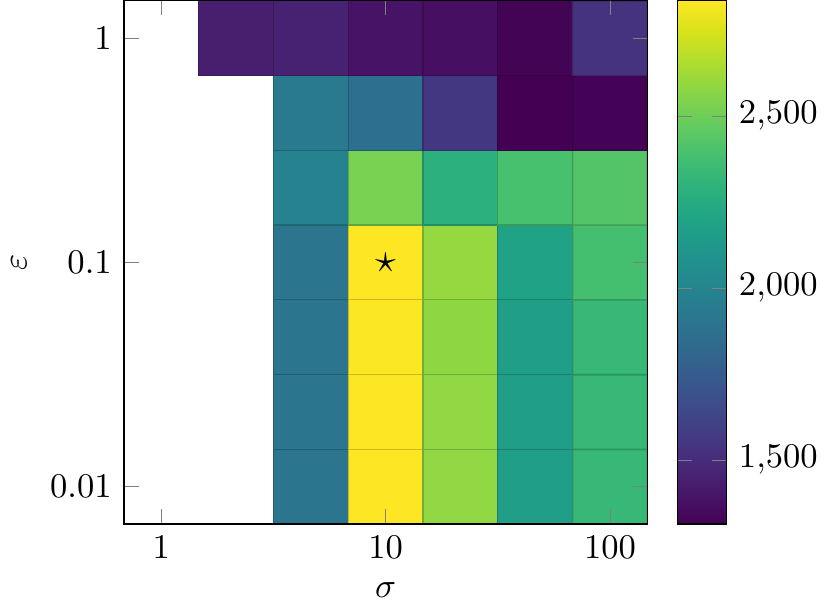}
\caption{Overall crispness of the PCCA+ membership functions for different combinations of $\sigma$ and $\varepsilon$. High values indicate a better quality of the computed eigenvectors. Combinations for which kernel EDMD failed to produce purely real eigenvectors  (a requirement for PCCA+) are left white. The symbol $\star$ marks the optimum parameter combination.}
\label{fig:AD_parametersweep}
\end{figure}

The value of $\mathcal{I}$ for different combinations of $\sigma$ and $\varepsilon$ is shown in Figure \ref{fig:AD_parametersweep}. No regularization has been used to avoid too many parameter dependencies. We observe a maximum value of $\mathcal{I}=2835$ at $\sigma=10$, $\varepsilon=0.1$. A spectral gap is present after the fourth eigenvalue. The associated four membership functions projected onto the $(\varphi,\psi)$-space are shown in Figure~\ref{fig:AD_membershipfunctions} (a) and clearly indicate the four expected metastable sets. A discrete (``hard'') clustering can be constructed by assigning each test point the index of the membership function of maximal value and is shown in Figure \ref{fig:AD_membershipfunctions} (b). These four clusters would then form the four states of an MSM. 

As a remark, the amount of noise in both the fuzzy and hard clustering can be significantly reduced by applying just a slight regularization, e.g., $\eta=0.01$, to the eigenproblem. Assembling the trajectory-averaged Gram matrix $\gram[XY]$ took $596$~s on an eight-core desktop workstation, with the computation of the distances between the test points and trajectory points being the main contributor.

In comparison, the standard kernel EDMD algorithm without trajectory averaging, i.e., using only the test points $x_i$ and single realizations $y_i$ to assemble the Gram matrix, did not result in interpretable eigen- and membership functions when using the same kernel parameter $\sigma=10$ and no regularization. However, increasing the kernel parameter to $\sigma=20$ and using regularization with parameter $\eta=1$ again results in a spectral gap after $\lambda_4$ and membership functions that clearly indicate the metastable sets (Figure \ref{fig:AD_membershipfunctions} (c)). Although the crispness value is lower than for the trajectory--averaged case ($I=2421$), the discrete clustering is less noisy, as the broader kernel and regularization have a smoothing effect on the eigenvectors. Moreover, assembling the standard Gram matrix $\gram[XY]$ is significantly faster, taking only $0.3$~s.

In conclusion, also for realistic MD problems, the heuristic trajectory averaging method results in a measurable improvement of the (fuzzy) spectral clustering if the associated influence parameter $\varepsilon$ can be chosen appropriately, in addition to the kernel bandwidth $\sigma$.
However, if one is interested only in a discrete clustering of the states, for instance to subsequently build an MSM, the standard kernel EDMD method with increased kernel and smoothing parameter yields comparable results at significantly reduced computing cost.

\begin{figure*}
    \centering
    \includegraphics[width=\textwidth]{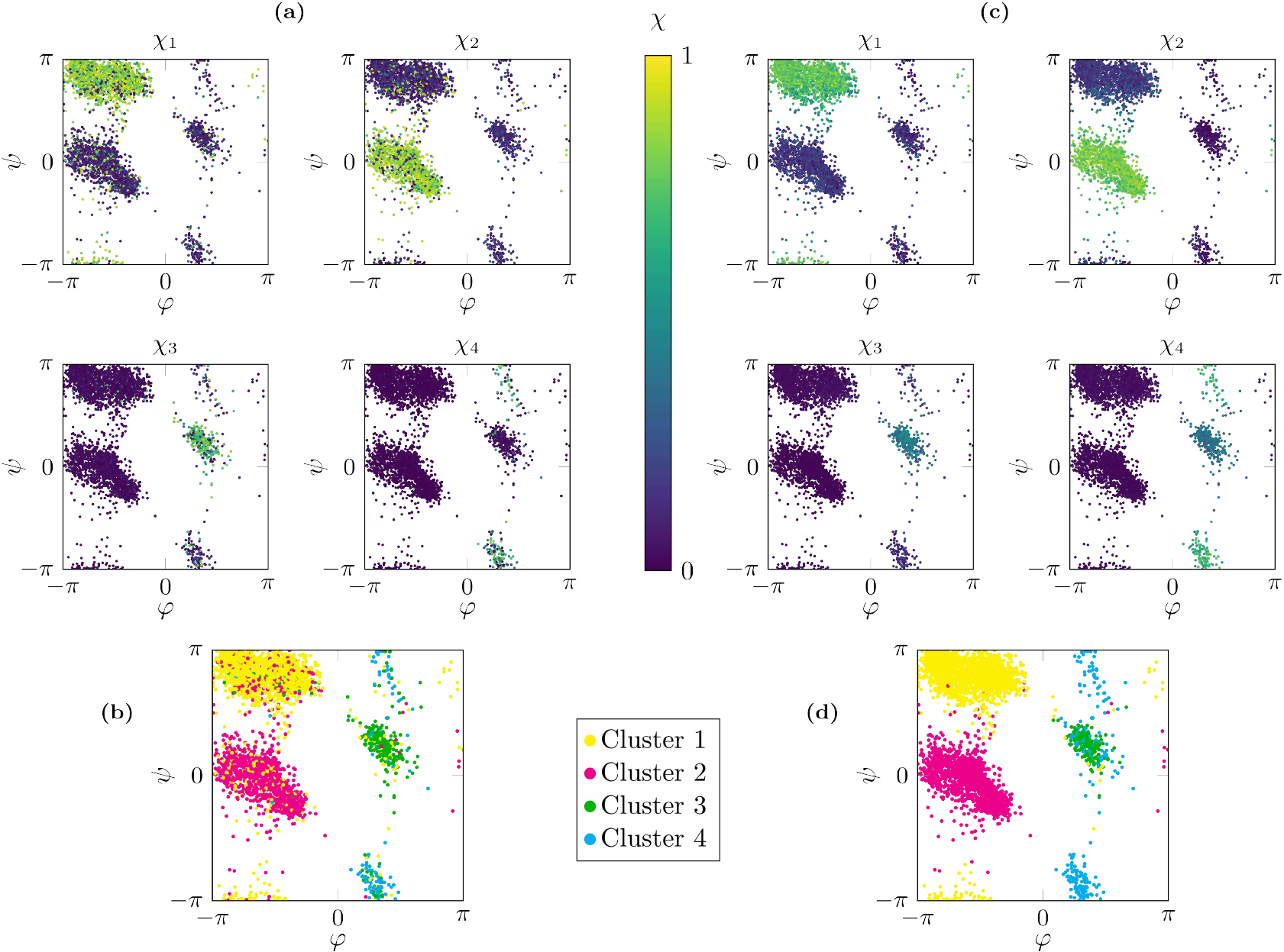}
    \caption{Conformational analysis of the alanine dipeptide. (a) \& (c) PCCA+ membership functions computed from the kernel transfer operator eigenfunctions using the trajectory averaged Gram matrix and the standard Gram matrix, respectively. This represents a ``fuzzy'' clustering of the test points into metastable sets. (b) \& (d) Associated ``hard'' clustering, defined by assigning each test point the index of the membership function of maximum value. }
\label{fig:AD_membershipfunctions}

\end{figure*}

\subsection{NTL9}
\label{sec:NTL9}

Finally, we apply (standard) kernel EDMD to the fast-folding protein NTL9. The molecule consists of 301 heavy atoms (624 atoms overall), divided into 40 residues. As the source of dynamical data we utilize an all-atom simulation of total length 1.11 ms that was performed on the Anton supercomputer \cite{Shaw_QuickFolding2011}.
In order to best capture the system's internal dynamics, we use a Gaussian kernel based on the contact map distance between snapshots, i.e., the kernel
\begin{equation}
\label{eq:NTL9kernel}
k(x,x') = \exp\big(-\|\xi(x) - \xi(y)\|^2/\sigma\big),
\end{equation}
where $\xi(x)$ is the contact map of $x$. The contact map is the $40\times 40$ matrix $A$, whose entries $A_{ij}$ are given by the distances between backbone residue~$i$ and backbone residue~$j$ (i.e., the minimum distance between the atoms in the respective residues). Thus, using this kernel is equivalent to converting the data into a $40\cdot 40=1600$-dimensional state space, which cannot be discretized by traditional grid-based Galerkin methods.

To obtain the data matrix $X$, the trajectory is subsampled with a time step of  100 ns. Likewise, the matrix $Y$ is subsampled from the trajectory, only with an offset of 50 ns from $X$. The rows of $Y$ can thus be interpreted as endpoints of single realizations of length $\tau=\text{50 ns}$ with the starting points being the rows of $X$. The contact maps are computed using the Gromacs software\cite{Gromacs}.

We assemble the Gram matrices $ \gram[XX]$ and $\gram[XY]$ for different values of $\sigma$ and solve the regularized eigenvalue problem \eqref{eq:regularized eigenproblem} with regularization parameter $\eta=0.2$. Figure \ref{fig:NTL9eigenvalues} (a) shows the 15 dominant eigenvalues for varying values of $\sigma$. According to the variational approach of conformational dynamics, higher eigenvalues are in general associated with a better approximation of the original eigenproblem, as discretizations always underestimate the original eigenvalues~\cite{NoNu13}. Thus, the mean value of the dominant eigenvalue can be used as an alternative objective function for choosing the optimal kernel bandwidth. For NTL9, the maximum is observed at $\sigma=10$. Thus, we choose this bandwidth for the further analysis. 

\begin{figure}
    \centering
    \includegraphics[width=\linewidth]{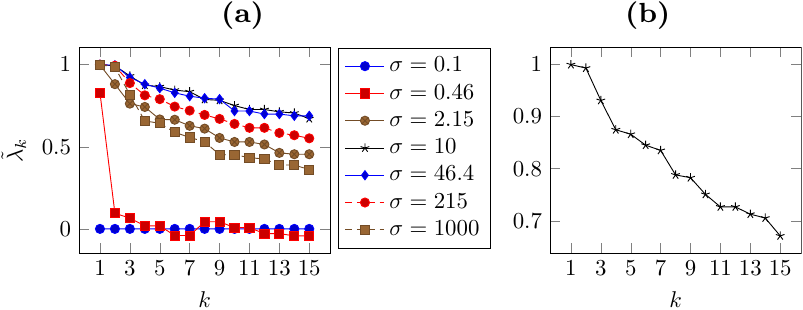}
    \caption{(a) The 14 dominant kernel Koopman operator eigenvalues for different kernel bandwidths $\sigma$. The largest eigenvalues, thus maximal metastability, are observed for $\sigma=10$. (b) Detailed view of the eigenvalues at $\sigma=10$, indicating a spectral gap after $\lambda_1$.}
\label{fig:NTL9eigenvalues}
\end{figure}

A spectral gap can be observed after the second eigenvalue $\tilde{\lambda}_i$ (Figure~\ref{fig:NTL9eigenvalues} (b)), indicating that the data set can be divided into two metastable sets. 
Indeed, clustering the two dominant eigenvectors into two sets using the PCCA+ algorithm as described in the previous section leads to the two conformations shown in Figure \ref{fig:NTL9_clustering}~(a). The conformations coincide very well with the \emph{folded} and \emph{unfolded} states identified in Ref.~\onlinecite{Noe_VAMPnets2018}, where the same data set was analyzed using deep learning methods.
We thus investigate whether we can also reproduce the second hierarchy of metastable conformations identified in Ref.~\onlinecite{Noe_VAMPnets2018} by analyzing the remaining computed eigenfunctions. The result of clustering the eigenvectors into five states using the PCCA+ algorithm can be seen in Figure \ref{fig:NTL9_clustering} (b). Four of the five conformations identified in Ref.~\onlinecite{Noe_VAMPnets2018} could be accurately reproduced by the kernel EDMD algorithm together with PCCA+. The absence of the fifth conformation (named ``Intermediate'' in Ref.~\onlinecite{Noe_VAMPnets2018}) can be explained by the fact that our method uses a factor of ten less data points and that the missing conformation is small (0.6\% of overall states), thus easy to undersample. 

\begin{remark}
The undersampling of high-energy and thus low-population metastable states is a general problem when only a relatively low number of test points is subsampled from Gibbs-distributed trajectory data. 
It has recently been demonstrated in Ref.~\onlinecite{BiBaSch18} how more evenly distributed test points can be generated. 
However, as the distribution of the test points directly influences the form of the transfer operator, and only the eigenpairs of transfer operators associated with the Lebesgue- or the Gibbs measure possess physical interpretations in the context of metastability, a reweighting, for example by importance sampling techniques, has to be introduced to preserve the correct statistics.
\end{remark}

\begin{figure*}
    \centering
    \includegraphics[width=\linewidth]{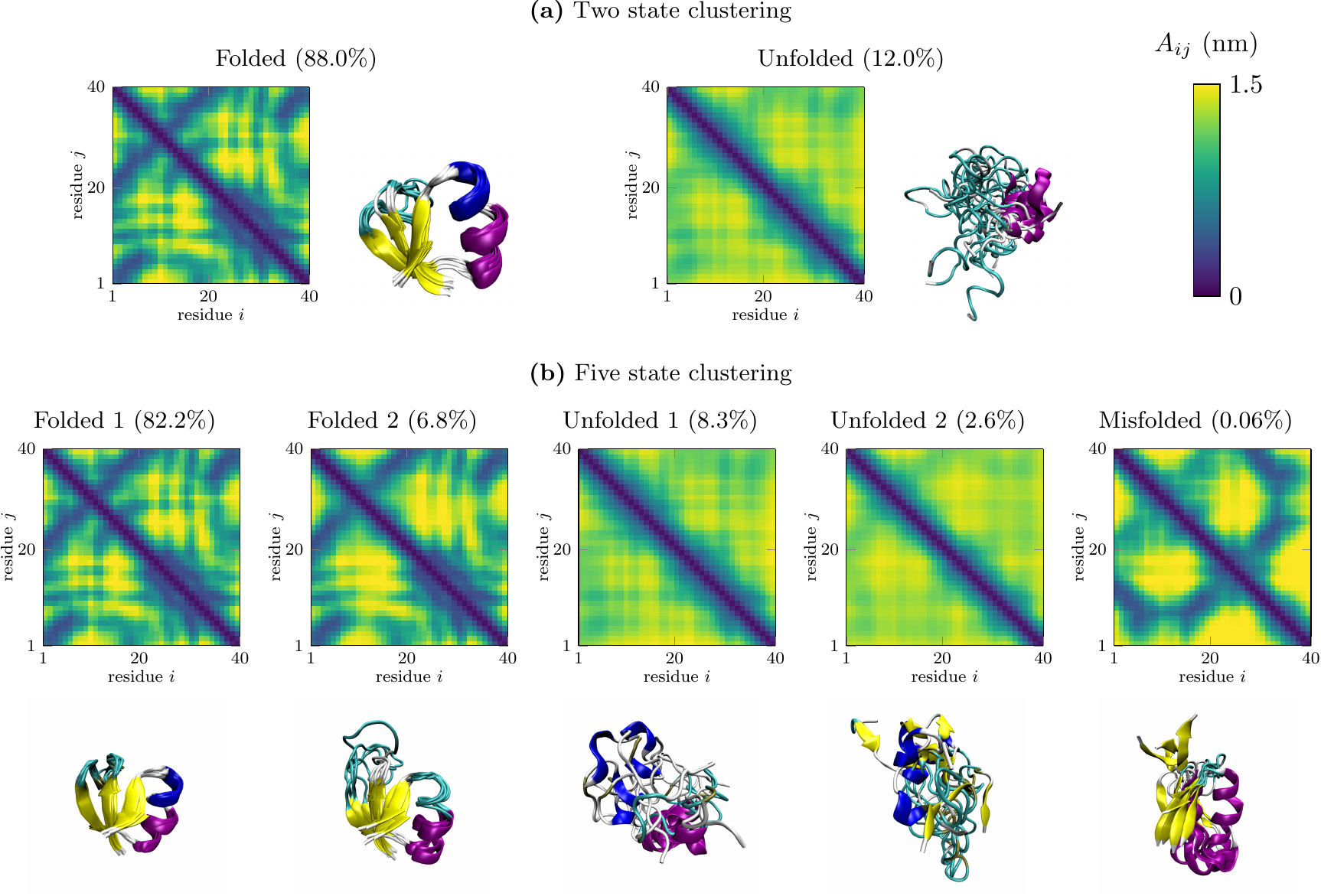}
    \caption{Two state (a) and five state (b) PCCA+ clustering of the dominant NTL9 eigenvectors. The contact maps shown are the mean contact maps of all the points in the respective clusters. The rendered images are the overlays of five randomly-chosen points from the respective cluster. The percentage indicates the relative number of test points in the clusters.}
\label{fig:NTL9_clustering}
\end{figure*}

\section{Conclusion}
\label{sec:Conclusion}

We have shown how the kernel-based discretization of transfer operator eigendecomposition problems leads to efficient and purely data-driven methods for the analysis of molecular conformation dynamics. The derived methods are strongly related to other well-known approaches such as kernel EDMD, kernel TICA, and empirical estimates of kernel transfer operators (as well as conditional mean embeddings). Furthermore, we have proposed an extension to stochastic dynamical systems that is based on averaged Gram matrices and shown that these techniques lead to more accurate spectral decompositions than previous kernel-based approaches. The methods have been applied to realistic molecular dynamics simulations and compared with deep learning techniques, illustrating that kernel-based methods are able to tackle high-dimensional problems for which classical discretization methods fail. Moreover, these methods can also be applied to non-standard representation of molecules, such as contact maps or graph representations of proteins. This enables the user to define similarity measures tailored to specific problems by exploiting domain knowledge about the system.

One possibility to extend the approach presented here would be to combine it with deep kernel learning\cite{wilson2016deep} or other machine learning techniques. Both deep learning and RKHS-based methods are very powerful and their combination might help mitigate the curse of dimensionality typically associated with the approximation of transfer operators pertaining to high-dimensional systems. For non-equilibrium systems, a singular value decomposition might be advantageous. The SVD of empirical estimates of kernel transfer operators (and more general empirical RKHS operators) was recently proposed in Ref.~\onlinecite{MSKS18} and might have additional applications such as low-rank approximation of operators or computing pseudoinverses of operators. Moreover, Gaussian processes might be beneficial to include also uncertainties in these algorithms.

\section*{Acknowledgements}

This research has been partially funded by Deutsche Forschungsgemeinschaft (DFG) through grant CRC 1114 (Projects B03 and B06). We would like to thank D.~E.~Shaw for the NTL9 data set.

\bibliographystyle{unsrt}
\bibliography{KernelMD}

\appendix

\section{Detailed derivations and relationships with other methods}
\label{app:Relationships with other methods}

We will now present an alternative derivation of the operators derived in Section~\ref{sec:Kernel-based discretization of eigenvalue problems} and highlight relationships with other well-known techniques to approximate eigenfunctions of transfer operators, see also Refs.~\onlinecite{KNKWKSN18, KSM17}. To this end, we first introduce kernel transfer operators. Different discretizations of these operators then result in methods such as EDMD or VAC and their kernel-based counterparts. Markov State Models (MSMs), in turn, can be seen as a special case of EDMD where the basis contains indicator functions for a partition of the state space.

\subsection{Mercer feature space}

For the following proofs, we will need the Mercer feature space representation of the kernel, which can be constructed by considering the eigenvalues and eigenfunctions of the integral operator $ \ebd[k] $ associated with the kernel $ k $. The derivation of this representation is mainly based on Ref.~\onlinecite{StCh08}.

There exist an at most countable orthonormal system of eigenfunctions $ (e_\imath)_{\imath \in I} $ and positive eigenvalues $ (\gamma_\imath)_{\imath \in I} \in \ell_1(I) $ of $ \mathcal{E}_k $ such that
\begin{equation*}
    \mathcal{E}_k f = \sum_{\imath \in I} \gamma_\imath \innerprod{f}{e_\imath}_\mu e_\imath
\end{equation*}
for any $ f \in L_2(\mu) $. We assume the eigenvalues $ \gamma_\imath $ to be sorted in non-increasing order. For the sake of simplicity, let $ \phi_\imath = \sqrt{\gamma_\imath} \ts e_\imath $ in what follows.

\begin{theorem}[Mercer's theorem] \label{thm:Mercer}
For $ x, x^\prime \in \inspace $, it holds that
\begin{equation*}
    k(x, x^\prime)
        = \sum_{\imath \in I} \gamma_\imath \ts e_\imath(x) \ts e_\imath(x^\prime)
        = \sum_{\imath \in I} \phi_\imath(x) \ts \phi_\imath(x^\prime),
\end{equation*}
where---for infinite-dimensional feature spaces---the convergence is absolute and uniform \cite{Schoe01}.
\end{theorem}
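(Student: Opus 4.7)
The plan is to derive the expansion from the spectral decomposition of $\mathcal{E}_k$ already recorded in the preamble to the theorem, and then to upgrade pointwise convergence to absolute and uniform convergence by a positivity argument together with Dini's theorem. The preamble to the theorem already furnishes the spectral data $(\gamma_\imath, e_\imath)$, obtained by treating $\mathcal{E}_k$ as a compact, self-adjoint, positive operator on $L_2(\mu)$; I would begin by noting that under the standing assumptions (continuous kernel, compact $\inspace$, finite $\mu$) this operator is in fact trace-class, with $\sum_\imath \gamma_\imath = \int k(x,x)\,\dd\mu(x) < \infty$, consistent with $(\gamma_\imath) \in \ell_1(I)$.

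The central device is the truncated kernel
$$
   k_N(x,x') := k(x,x') - \sum_{\imath=1}^N \gamma_\imath \ts e_\imath(x) \ts e_\imath(x'),
$$
which I would show is still positive definite. This follows because the associated integral operator is $\mathcal{E}_k$ restricted to the orthogonal complement of $\mspan\{e_1,\ldots,e_N\}$, which remains positive and self-adjoint. Evaluating positivity on the diagonal gives $\sum_{\imath=1}^N \gamma_\imath \ts e_\imath(x)^2 \leq k(x,x)$, so the diagonal series converges pointwise and absolutely. A Cauchy--Schwarz bound on the tail,
$$
   \Bigl|\sum_{\imath=N+1}^M \gamma_\imath \ts e_\imath(x) \ts e_\imath(x')\Bigr|^2 \leq \Bigl(\sum_{\imath=N+1}^M \gamma_\imath \ts e_\imath(x)^2\Bigr) \Bigl(\sum_{\imath=N+1}^M \gamma_\imath \ts e_\imath(x')^2\Bigr),
$$
transports this to pointwise absolute convergence off the diagonal and identifies the limit with $k(x,x')$ via the $L_2$ reconstruction of $k(x,\cdot)$ from its expansion in the basis $(e_\imath)$.

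For uniform convergence in the infinite-dimensional case, I would invoke Dini's theorem. The diagonal partial sums $S_N(x) := \sum_{\imath=1}^N \gamma_\imath \ts e_\imath(x)^2$ form a monotonically increasing sequence of continuous functions converging pointwise to the continuous function $k(x,x)$ on the compact set $\inspace$, hence the convergence is uniform. Feeding uniform control of the diagonal tails back into the Cauchy--Schwarz bound above then yields uniform convergence of the full series in both arguments, and the second identity in the statement follows immediately from the substitution $\phi_\imath = \sqrt{\gamma_\imath}\ts e_\imath$.

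The main technical obstacle is that Dini's theorem requires continuous representatives of the $e_\imath$, whereas the spectral theorem only delivers $L_2$ equivalence classes. I would resolve this by noting that for $\gamma_\imath > 0$ we have $e_\imath = \gamma_\imath^{-1}\mathcal{E}_k e_\imath$, and the right-hand side is continuous because $k$ is continuous on the compact set $\inspace \times \inspace$ (standard dominated-convergence argument on the integrand). Replacing each $e_\imath$ by its continuous version on a $\mu$-null set preserves the eigenrelation and makes all the preceding steps rigorous; the remainder of the proof is bookkeeping on the orthonormal expansion.
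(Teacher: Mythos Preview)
The paper does not prove this statement at all: Mercer's theorem is quoted as a classical result, with the convergence claim attributed to the cited reference (Sch\"olkopf). There is therefore no ``paper's own proof'' to compare against.

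Your outline is the standard textbook argument for Mercer's theorem (positivity of the truncated kernel, diagonal control, Cauchy--Schwarz off the diagonal, Dini for uniformity, and the bootstrap $e_\imath = \gamma_\imath^{-1}\mathcal{E}_k e_\imath$ to secure continuous representatives). It is essentially correct as a proof sketch. The one place that deserves a sentence more of care is the identification of the limit with $k$: the orthonormal system $(e_\imath)_{\imath\in I}$ corresponds only to the nonzero eigenvalues and need not be complete in $L_2(\mu)$, so ``$L_2$ reconstruction of $k(x,\cdot)$'' requires the observation that $k(x,\cdot)$ lies in the closed range of $\mathcal{E}_k$ (equivalently, is orthogonal to $\ker\mathcal{E}_k$), or the alternative route of showing that the continuous remainder $k-\tilde{k}$ induces the zero integral operator and hence vanishes when $\mu$ has full support. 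Either fix is routine, but as written that step is slightly underspecified.
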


We can then construct the corresponding RKHS explicitly.

\begin{theorem}
Let $ \mathbb{H} = \left\{ \sum_{\imath \in I} \alpha_\imath \phi_\imath \bigm| (\alpha_\imath)_{\imath \in I} \in \ell_2(I) \right\} $ and $ f, g \in \mathbb{H} $, with $ f = \sum_{\imath \in I} \alpha_\imath \ts \phi_\imath $ and $  g = \sum_{\imath \in I} \beta_\imath \ts \phi_\imath $. Define
\begin{equation*}
    \innerprod{f}{g}_\mathbb{H} = \sum_{\imath \in I} \alpha_\imath \ts \beta_\imath,
\end{equation*}
then $ \mathbb{H} $ with this inner product is the RKHS associated with the kernel $ k $.
\end{theorem}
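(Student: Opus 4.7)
The plan is to verify the two defining properties of an RKHS for $\mathbb{H}$: that it is a Hilbert space of functions on $\inspace$, and that the kernel $k$ reproduces point evaluation via the given inner product. By the uniqueness of the RKHS associated to a positive definite kernel, establishing these two facts identifies $\mathbb{H}$ as the RKHS of $k$.

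First I would check that $\mathbb{H}$ is a well-defined vector space of functions and that the series $f = \sum_{\imath \in I} \alpha_\imath \phi_\imath$ converges pointwise for every $(\alpha_\imath) \in \ell_2(I)$. For any fixed $x \in \inspace$, Mercer's theorem gives $\sum_{\imath \in I} \phi_\imath(x)^2 = k(x,x) < \infty$, so $(\phi_\imath(x))_{\imath \in I} \in \ell_2(I)$. Then by Cauchy--Schwarz in $\ell_2(I)$,
\begin{equation*}
    \sum_{\imath \in I} \abs{\alpha_\imath \phi_\imath(x)} \le \norm{(\alpha_\imath)}_{\ell_2} \sqrt{k(x,x)},
\end{equation*}
which is finite. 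This shows pointwise absolute convergence and also that the coefficient-to-function map $(\alpha_\imath) \mapsto f$ is well-defined; it is moreover injective because the $\phi_\imath$ form (up to the $\sqrt{\gamma_\imath}$ rescaling) an orthonormal system in $L_2(\mu)$, so distinct coefficient sequences yield distinct $L_2$-equivalence classes, hence distinct functions at almost every $x$ (and in fact everywhere, once we use the reproducing identity below). Thus the defined inner product makes $\mathbb{H}$ isometrically isomorphic to $\ell_2(I)$, and is therefore a Hilbert space.

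Next I would verify that $k(\cdot, x) \in \mathbb{H}$ and that it reproduces point evaluation. Mercer's theorem yields the expansion $k(\cdot, x) = \sum_{\imath \in I} \phi_\imath(x) \, \phi_\imath$, so its coefficient sequence is $(\phi_\imath(x))_{\imath \in I}$, which lies in $\ell_2(I)$ by the calculation above; hence $k(\cdot, x) \in \mathbb{H}$. Then for any $f = \sum_{\imath \in I} \alpha_\imath \phi_\imath \in \mathbb{H}$, the definition of the inner product gives
\begin{equation*}
    \innerprod{f}{k(\cdot, x)}_\mathbb{H} = \sum_{\imath \in I} \alpha_\imath \phi_\imath(x) = f(x),
\end{equation*}
where the last equality is the pointwise definition of $f$ and the series converges absolutely by the bound above. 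This is exactly the reproducing property. Combined with the fact that $k$ is a positive definite kernel, the Moore--Aronszajn theorem then identifies $\mathbb{H}$ uniquely as the RKHS of $k$.

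The main technical obstacle is keeping the different notions of convergence straight: Mercer's series converges absolutely and uniformly on $\inspace \times \inspace$, the expansions of elements of $\mathbb{H}$ converge in the $\mathbb{H}$-norm, and they must be shown to converge pointwise so that elements of $\mathbb{H}$ are genuine functions rather than equivalence classes. The $\ell_2$-summability of $(\phi_\imath(x))_\imath$ from Mercer is what ties these together, and handling that estimate cleanly --- in particular the injectivity of the coefficient map and the independence of $f(x)$ from the order of summation --- is the one place where care is required; the rest of the argument is a direct manipulation of the defining inner product.
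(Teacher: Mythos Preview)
Your proposal is correct and contains the paper's argument as its core step. The paper does not give a full proof of this theorem; it is stated with reference to the literature (Steinwart--Christmann), and the only thing the paper actually verifies is the reproducing property, via exactly the computation you give:
\begin{equation*}
    \innerprod{f}{k(x,\cdot)}_\mathbb{H}
    = \Big\langle \sum_{\imath \in I} \alpha_\imath \phi_\imath,\ \sum_{\imath \in I} \phi_\imath(x)\,\phi_\imath \Big\rangle_\mathbb{H}
    = \sum_{\imath \in I} \alpha_\imath \phi_\imath(x) = f(x).
\end{equation*}
Your additional work---pointwise convergence from $\sum_\imath \phi_\imath(x)^2 = k(x,x)$ via Cauchy--Schwarz, the isometry with $\ell_2(I)$ giving completeness, and injectivity of the coefficient map---fills in the details the paper leaves to the cited reference, so your argument is strictly more complete but not methodologically different.
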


Note that the kernel defined above indeed has the reproducing property: Given a function $ f = \sum_{\imath \in I} \alpha_\imath \ts \phi_\imath \in \mathbb{H} $, we obtain using the definition of the inner product for~$ \mathbb{H} $
\begin{align*}
    \innerprod{f}{k(x, \cdot)}_\mathbb{H}
        &= \Big\langle \sum_{\imath \in I} \alpha_\imath \phi_\imath ,\, \sum_{\imath \in I} \phi_\imath(x) \ts \phi_\imath\Big\rangle_\mathbb{H} \\
        &= \sum_{\imath \in I} \alpha_\imath \ts \phi_\imath(x) = f(x).
\end{align*}

We define the projection of a function onto the reproducing kernel Hilbert space as follows:

\begin{definition}[Orthogonal projection]
Given a function $ f \in L_2(\mu) $, the \emph{orthogonal projection} onto $ \mathbb{H} $ is defined as
\begin{equation*}
    \pro[k] \ts f
        = \sum_{\imath \in I} \frac{\innerprod{f}{\phi_\imath}_\mu}{\innerprod{\phi_\imath}{\phi_\imath}_\mu} \phi_\imath
        = \sum_{\imath \in I} \innerprod{f}{e_\imath}_\mu e_\imath.
\end{equation*}
\end{definition}

\subsection{Covariance operators}

Let $ \phi $ be the Mercer feature space representation associated with the kernel $ k $. Instead of writing $ f = \sum_{\imath \in I} \alpha_\imath \ts \phi_\imath $, we will also use the shorter form $ f = \alpha^\top \phi $, even though $ I $ might be a (countably) infinite set.

\begin{definition}[Covariance operators\cite{Baker1973, MFSS16}]
Given $ f = \alpha^\top \phi \in \mathbb{H} $. The \emph{covariance operator} $ \cov[XX] \colon \mathbb{H} \to \mathbb{H} $ and the \emph{cross-covariance operator} $ \cov[XY] \colon \mathbb{H} \to \mathbb{H} $ are defined as
\begin{equation*}
    \cov[XX] f = (\mcov[XX] \ts \alpha)^\top \phi
    \quad \text{and} \quad
    \cov[XY] f = (\mcov[XY] \ts \alpha)^\top \phi,
\end{equation*}
where
\begin{equation*}
    \mcov[XX] = \int \phi(x) \otimes \phi(x) \ts \dd \mu(x)
\end{equation*}
and
\begin{equation*}
    \mcov[XY]  = \int \phi(x) \otimes \ko \phi(x) \ts \dd \mu(x)
\end{equation*}
and $ \otimes $ denotes the tensor product.
\end{definition}

For vector-valued functions, the Koopman operator is defined to act componentwise. In Refs.~\onlinecite{MFSS16, KSM17}, the covariance operator is defined with respect to the joint probability measure. However, note that due to the definition of the Koopman operator this is equivalent since $ \ko \phi(x) = \mathbb{E}_{\scriptscriptstyle Y \mid X}[\phi(Y) \mid X = x] $. Given training data $ (x_i, y_i) $, $ i = 1, \dots, m $, as above, we define
\begin{equation*}
    \Phi = [\phi(x_1), \dots, \phi(x_m)] \quad \text{and} \quad \Psi = [\phi(y_1), \dots, \phi(y_m)].
\end{equation*}
The empirical estimates of the matrix representations of the operators are given by
\begin{alignat*}{4}
    \mecov[XX] &= \frac{1}{m} \sum_{i=1}^m \phi(x_i) \otimes \phi(x_i)
               &&= \frac{1}{m} \Phi \ts \Phi^\top, \\
    \mecov[XY] &= \frac{1}{m} \sum_{i=1}^m \phi(x_i)\otimes \phi(y_i)
               &&= \frac{1}{m} \Phi \ts \Psi^\top.
\end{alignat*}

\begin{remark}
Note that for the feature matrix $ \Phi $ in Section~\ref{sec:Kernel-based discretization of eigenvalue problems} we used the canonical feature map $ x \mapsto \phi(x) = k(x, \cdot) $, whereas we now use the Mercer feature map $ x \mapsto \phi(x) = [\phi_1(x), \phi_2(x), \dots]^\top $, where $ \phi(x) $ is an element in a potentially infinite-dimensional vector space. As long as our algorithms rely only on kernel evaluations, it does not matter which feature space we consider since for both $ \innerprod{\phi(x)}{\phi(x^\prime)}_\mathbb{H} = k(x, x^\prime) $ and the resulting Gram matrices $ \gram[XX] = \Phi^\top \Phi $ are identical.
\end{remark}

\subsection{Properties of the integral operator}
\label{app:Properties of the integral operator}

The following lemma summarizes useful properties of the different operators.

\begin{lemma} \label{lem:operator properties}
It holds that:
\begin{enumerate}[label=(\roman*)]
\item $ \mathcal{E}_k f = \cov[XX] \pro[k] f $ for $ f \in L_2(\mu) $.
\item $ \ebd[k] \ts \pro[k] \ts f = \pro[k] \ts \ebd[k] \ts f = \ebd[k] \ts f $ for $ f \in L_2(\mu) $.
\item Any function $ f \in \mathbb{H} $ can be written as $ f = \mathcal{E}_k \tilde{f} $ with $ \tilde{f} \in \mathbb{H} $ provided that $ \left(\frac{\alpha_i}{\gamma_\imath}\right)_{\imath \in I} \in \ell_2(I) $.
\end{enumerate}
\end{lemma}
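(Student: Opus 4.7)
The plan is to prove all three claims directly in the Mercer feature space, where the operators $\ebd[k]$, $\pro[k]$ and $\cov[XX]$ each take a particularly transparent diagonal form. First I would compute the matrix representation of $\cov[XX]$ in the Mercer basis: by definition $[\mcov[XX]]_{\imath\jmath} = \int \phi_\imath(x) \phi_\jmath(x)\ts\dd\mu(x) = \sqrt{\gamma_\imath \gamma_\jmath}\innerprod{e_\imath}{e_\jmath}_\mu = \gamma_\imath \delta_{\imath\jmath}$, so $\cov[XX]$ is the multiplication operator that scales the $\imath$th Mercer coefficient by $\gamma_\imath$. In parallel, plugging the Mercer expansion $k(x,x') = \sum_\imath \gamma_\imath e_\imath(x) e_\imath(x')$ into the integral defining $\ebd[k]$ and interchanging sum and integral (justified by the absolute and uniform convergence in Theorem~\ref{thm:Mercer}) yields
\begin{equation*}
    \ebd[k] f = \sum_{\imath \in I} \gamma_\imath \innerprod{f}{e_\imath}_\mu e_\imath.
\end{equation*}
This identity is the common ingredient for all three parts.

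For (i), I would rewrite $\pro[k] f = \sum_\imath \innerprod{f}{e_\imath}_\mu e_\imath = \sum_\imath (\innerprod{f}{e_\imath}_\mu/\sqrt{\gamma_\imath})\phi_\imath$, so that its Mercer coefficient vector has entries $\innerprod{f}{e_\imath}_\mu/\sqrt{\gamma_\imath}$. Applying the diagonal operator $\cov[XX]$ multiplies the $\imath$th entry by $\gamma_\imath$, giving coefficients $\sqrt{\gamma_\imath}\innerprod{f}{e_\imath}_\mu$ in the $\phi_\imath$-basis, which is exactly the expansion of $\ebd[k] f$ derived above. For (ii), I would use the orthonormality of $(e_\imath)$ in $L_2(\mu)$: both $\pro[k] \ebd[k] f$ and $\ebd[k] \pro[k] f$ can be evaluated term by term using the formula for $\ebd[k]$, and in either order one obtains $\sum_\imath \gamma_\imath \innerprod{f}{e_\imath}_\mu e_\imath = \ebd[k] f$. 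Conceptually, $\ebd[k]$ annihilates the orthogonal complement of $\overline{\mspan}\{e_\imath\}$ in $L_2(\mu)$, so composing with the projection $\pro[k]$ onto that closed span on either side has no effect.

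For (iii), given $f = \sum_\imath \alpha_\imath \phi_\imath \in \mathbb{H}$, I would make the ansatz $\widetilde f = \sum_\imath \widetilde\alpha_\imath \phi_\imath$ and apply the same Mercer computation used in (i): since $\innerprod{\widetilde f}{e_\imath}_\mu = \sqrt{\gamma_\imath}\widetilde\alpha_\imath$, one finds $\ebd[k] \widetilde f = \sum_\imath \gamma_\imath \widetilde\alpha_\imath \phi_\imath$. Matching coefficients against $f$ forces the choice $\widetilde\alpha_\imath = \alpha_\imath/\gamma_\imath$, and by the explicit description of $\mathbb{H}$ given just after Mercer's theorem, $\widetilde f \in \mathbb{H}$ if and only if $(\alpha_\imath/\gamma_\imath)_{\imath \in I} \in \ell_2(I)$, which is the stated hypothesis.

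The main obstacle I anticipate is purely a bookkeeping one: the operator $\ebd[k]$ is defined as $\operatorname{id}\circ\mathcal{S}_k$, so one must be careful to distinguish the two inner products $\innerprod{\cdot}{\cdot}_\mu$ and $\innerprod{\cdot}{\cdot}_\mathbb{H}$ when switching between $e_\imath$-coefficients and $\phi_\imath$-coefficients, and to check that the $\ell_2(I)$-summability assertions (for $(\gamma_\imath)\in\ell_1$, for the projection $\pro[k] f$, and for $\widetilde f$ in (iii)) justify exchanging sums with integrals and with the action of the relevant operators. Once this is tracked cleanly, each of the three identities reduces to a one-line verification in the Mercer basis.
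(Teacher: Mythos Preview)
Your proposal is correct and follows essentially the same route as the paper: both arguments diagonalize everything in the Mercer basis, compute $[\mcov[XX]]_{\imath\jmath}=\gamma_\imath\delta_{\imath\jmath}$, and read off (i) and (iii) from the resulting coefficient identities (the paper writes your $\widetilde\alpha_\imath=\alpha_\imath/\gamma_\imath$ as $\widetilde f=(\mcov[XX]^{-1}\alpha)^\top\phi$). The only cosmetic difference is in (ii): the paper invokes (i) together with $\pro[k]^2=\pro[k]$ and the fact that $\cov[XX]$ maps into $\mathbb{H}$, whereas you verify both equalities by a direct term-by-term computation; both arguments are equally valid and amount to the same observation that $\ebd[k]$ already lands in the closed span of the $e_\imath$.
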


\begin{proof} The properties follow directly from the series expansion of $ \ebd[k] $. \\[1ex]
\textit{(i)} The matrix $ \mcov[XX] $ is a diagonal matrix whose entries are the eigenvalues $ \gamma_\imath $ of the integral operator $ \ebd[k] $ since
\begin{equation*}
    [\mcov[XX]]_{\imath \jmath}
        = \int \phi_\imath(x) \ts \phi_\jmath(x) \ts \dd \mu(x)
        = \sqrt{\gamma_\imath} \ts \sqrt{\gamma_\jmath} \ts \delta_{\imath\jmath}
\end{equation*}
so that $ \cov[XX] \ts \pro[k] \ts f = \sum_{\imath \in I} \gamma_\imath \ts \innerprod{f}{e_\imath}_\mu \ts e_\imath = \ebd[k] f $. \\[1ex]
\textit{(ii)} $ \pro[k]^2 = \pro[k] $ and $ \cov[XX] $ is a mapping from $ \mathbb{H} $ to $ \mathbb{H} $. \\[1ex]
\textit{(iii)} For $ f = \alpha^\top \phi $, define $ \tilde{f} = (\mcov[XX]^{-1} \ts \alpha)^\top \phi $, where $ \mcov[XX]^{-1} = \diag(\gamma_\imath^{-1})_{\imath \in I} $.
\end{proof}

It follows in particular that $ \mathcal{E}_k f = \cov[XX] f $ for $ f \in \mathbb{H} $.

\begin{remark}
For infinite-dimensional feature spaces, the assumption $ \left(\frac{\alpha_i}{\gamma_\imath}\right)_{\imath \in I} \in \ell_2(I) $ will not be satisfied for all $ f \in \mathbb{H} $. By setting $ \alpha_\imath = 0 $ for $ \imath > N $, where $ N $ is fixed, the equation can, however, be satisfied approximately. In practice, since only finitely many data points are considered, we can always construct a finite-dimensional feature map (e.g., the data-dependent kernel map, see Ref.~\onlinecite{Schoe01}).
\end{remark}

\subsection{Kernel transfer operators}
\label{app:Kernel transfer operators}

Kernel transfer operators, which were introduced in Ref.~\onlinecite{KSM17}, can be regarded as approximations of transfer operators in RKHSs. The underlying assumption is that both the densities or observables and the densities or observables propagated by the Perron--Frobenius or Koopman operator, respectively, are functions in the RKHS $ \mathbb{H} $, which, in general, is not the case. Empirical estimates of these operators result, under certain conditions, in methods such as EDMD \cite{WKR15} or VAC \cite{NKPMN14} or their kernel-based counterparts \cite{WRK15,SP15}. Let us start with a brief description of kernel transfer operators, see also Ref.~\onlinecite{KSM17}. The derivation here is slightly different in that we directly use the Mercer feature space representation. The goal is to illustrate connections with Galerkin approximations.

\begin{proposition} \label{thm:covariance operators}
Let $ f, g \in \mathbb{H} $ with $ f = \alpha^\top \phi $ and $ g = \beta^\top \phi $, then
\begin{align*}
    \innerprod{f}{g}_\mu
        &= \innerprod{f}{\cov[XX] \ts g}_\mathbb{H}
         = \alpha^\top \mcov[XX] \ts \beta, \\
    \innerprod{f}{\ko g}_\mu
        &= \innerprod{f}{\cov[XY] g}_\mathbb{H}
         = \alpha^\top \mcov[XY] \ts \beta.
\end{align*}
\end{proposition}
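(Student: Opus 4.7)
The plan is to prove both chains of equalities by expanding every function in the Mercer feature basis $(\phi_\imath)_{\imath\in I}$ and reducing each statement to a routine computation that exploits (a) the $L_2(\mu)$-orthogonality $\langle e_\imath,e_\jmath\rangle_\mu = \delta_{\imath\jmath}$, (b) the $\mathbb{H}$-orthonormality that gives the coordinate formula $\langle \alpha^\top\phi,\gamma^\top\phi\rangle_{\mathbb{H}} = \alpha^\top\gamma$, and (c) the fact, already noted in the proof of Lemma~\ref{lem:operator properties}, that $C_{XX} = \diag(\gamma_\imath)_{\imath\in I}$. The rightmost equality in each line of the proposition is essentially immediate from the definition of the covariance operators: since $\mathcal{C}_{XX}g = (C_{XX}\beta)^\top\phi$ and $\mathcal{C}_{XY}g = (C_{XY}\beta)^\top\phi$ by definition, the coordinate formula for $\langle\cdot,\cdot\rangle_{\mathbb{H}}$ yields $\langle f,\mathcal{C}_{XX}g\rangle_{\mathbb{H}} = \alpha^\top C_{XX}\beta$ and $\langle f,\mathcal{C}_{XY}g\rangle_{\mathbb{H}} = \alpha^\top C_{XY}\beta$ at once.

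For the first chain, I would compute the $L_2(\mu)$-inner product directly by substituting the series for $f$ and $g$, interchanging sum and integral, and using $\phi_\imath = \sqrt{\gamma_\imath}\ts e_\imath$:
\[
    \langle f,g\rangle_\mu = \sum_{\imath,\jmath} \alpha_\imath\beta_\jmath \sqrt{\gamma_\imath\gamma_\jmath}\,\delta_{\imath\jmath} = \sum_\imath \gamma_\imath\,\alpha_\imath\beta_\imath,
\]
which is $\alpha^\top C_{XX}\beta$ because $C_{XX}$ is the diagonal matrix with entries $\gamma_\imath$. This closes the first line.

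For the second chain, I first invoke the convention that $\mathcal{K}$ acts componentwise on the Mercer feature vector, so that $\mathcal{K}g = \beta^\top\mathcal{K}\phi$. Expanding again and exchanging sum and integral,
\[
    \langle f,\mathcal{K}g\rangle_\mu = \sum_{\imath,\jmath} \alpha_\imath\beta_\jmath \int \phi_\imath(x)\,(\mathcal{K}\phi_\jmath)(x)\,\dd\mu(x),
\]
and I identify the integral as the $(\imath,\jmath)$-entry of $C_{XY} = \int \phi(x)\otimes \mathcal{K}\phi(x)\,\dd\mu(x)$, so the right-hand side equals $\alpha^\top C_{XY}\beta$, as desired.

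The only genuine technicality is justifying the interchange of summation and integration in both computations when the feature space is infinite-dimensional. This follows from the $\ell_1$-summability of $(\gamma_\imath)$, the hypothesis $\alpha,\beta\in\ell_2(I)$ coming from $f,g\in\mathbb{H}$, and, for the Koopman line, the $L_2(\mu)$-contractivity of $\mathcal{K}$ which implies $\|\mathcal{K}\phi_\jmath\|_{L_2(\mu)}\le\|\phi_\jmath\|_{L_2(\mu)} = \sqrt{\gamma_\jmath}$; together these yield absolute convergence via Cauchy--Schwarz. I would sketch this integrability check briefly rather than spell it out, since the algebraic content of the proposition is carried entirely by the diagonalization of $C_{XX}$ and the defining formula for $C_{XY}$.
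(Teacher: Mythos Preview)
Your proposal is correct and follows essentially the same route as the paper: both arguments reduce the $L_2(\mu)$ inner products to $\alpha^\top C_{\bullet\bullet}\beta$ by expanding $f$ and $g$ in the Mercer basis and pulling the coefficients through the integral, and both obtain the $\mathbb{H}$ side directly from the defining relations $\mathcal{C}_{\bullet\bullet}g=(C_{\bullet\bullet}\beta)^\top\phi$ together with the coordinate formula for $\langle\cdot,\cdot\rangle_{\mathbb H}$. The only cosmetic differences are that the paper works in compact tensor-product notation rather than componentwise sums, and that it omits the interchange-of-limits justification you sketch (your appeal to $L_2(\mu)$-contractivity of $\mathcal{K}$ tacitly assumes $\mu$ is invariant, so if you include that remark you may want to phrase it as boundedness rather than contractivity).
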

\begin{proof}
We show the second part, the first one follows analogously:
\begin{align*}
    \innerprod{f}{\ko g}_\mu
        &= \int \alpha^\top \phi(x) \ts \beta^\top \ko \phi(x) \ts \dd \mu(x) \\
        &= \alpha^\top \left[ \int \phi(x) \otimes \ko \phi(x) \ts \dd \mu(x) \right] \beta \\
        &= \alpha^\top \ts \mcov[XY] \ts \beta.
\end{align*}
Similarly,
\begin{align*}
    \innerprod{f}{\cov[XY] g}_\mathbb{H}
        &= \innerprod{\alpha^\top \phi}{\cov[XY] (\beta^\top \phi)}_\mathbb{H} \\
        &= \innerprod{\alpha^\top \phi}{(\mcov[XY] \ts \beta)^\top \phi}_\mathbb{H} \\
        &= \alpha^\top \ts \mcov[XY] \ts \beta,
\end{align*}
where we used the definition of the inner product in $ \mathbb{H} $ in the last step.
\end{proof}

\begin{proposition} \label{thm:Fukumizu04}
Assuming that $ \ko g \in \mathbb{H} $ for all $ g \in \mathbb{H} $, it holds that $ \cov[XX] \ko g = \cov[XY] g $.
\end{proposition}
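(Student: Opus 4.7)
The plan is to use Proposition \ref{thm:covariance operators} twice and conclude by non-degeneracy of the inner product on $\mathbb{H}$. The crucial point is that the assumption $\ko g \in \mathbb{H}$ lets us treat $\ko g$ as a legitimate argument of $\cov[XX]$, so both sides of the asserted identity lie in $\mathbb{H}$ and can be compared there.

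Concretely, I would first fix an arbitrary test function $f \in \mathbb{H}$ and apply the second identity of Proposition \ref{thm:covariance operators} to obtain $\innerprod{f}{\ko g}_\mu = \innerprod{f}{\cov[XY] g}_\mathbb{H}$. Next, since $\ko g \in \mathbb{H}$ by assumption, I would apply the first identity of the same proposition with $g$ replaced by $\ko g$ to obtain $\innerprod{f}{\ko g}_\mu = \innerprod{f}{\cov[XX](\ko g)}_\mathbb{H}$. Equating the right-hand sides yields
\begin{equation*}
\innerprod{f}{\cov[XX]\ko g - \cov[XY] g}_\mathbb{H} = 0 \quad \text{for all } f \in \mathbb{H}.
\end{equation*}
Both $\cov[XX]\ko g$ and $\cov[XY] g$ are elements of $\mathbb{H}$ (as $\cov[XX]$ and $\cov[XY]$ map $\mathbb{H}\to\mathbb{H}$), so their difference is in $\mathbb{H}$, and non-degeneracy of $\innerprod{\cdot}{\cdot}_\mathbb{H}$ forces $\cov[XX]\ko g = \cov[XY] g$.

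There is no real obstacle here: the statement is essentially a bookkeeping identity that records the fact that $\cov[XX]$ represents the $L_2(\mu)$ inner product on $\mathbb{H}$ and $\cov[XY]$ represents the bilinear form $(f,g)\mapsto \innerprod{f}{\ko g}_\mu$. The only subtlety worth flagging is that the chain of equalities makes unrestricted use of the first identity only under the hypothesis $\ko g \in \mathbb{H}$; without it, the expression $\cov[XX]\ko g$ need not make sense (or would require the extension via $\pro[k]$ from Lemma \ref{lem:operator properties}(i)). I would therefore mention explicitly where this hypothesis enters, so the reader sees why it is needed and how it could be weakened by replacing $\ko g$ with $\pro[k]\ko g$.
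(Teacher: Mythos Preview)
Your argument is correct. The paper itself does not supply a proof of this proposition; it simply refers the reader to Ref.~\onlinecite{Fukumizu04}. Your route---pair both sides against an arbitrary $f\in\mathbb{H}$ via Proposition~\ref{thm:covariance operators} and invoke non-degeneracy of $\innerprod{\cdot}{\cdot}_\mathbb{H}$---is self-contained using only tools already established in the appendix, and it is exactly the pattern the paper uses in the very next (Perron--Frobenius) proposition. Your remark on where the hypothesis $\ko g\in\mathbb{H}$ enters is also accurate: the second identity of Proposition~\ref{thm:covariance operators} holds for all $g\in\mathbb{H}$ regardless, while the first identity needs $\ko g\in\mathbb{H}$ to be applicable with $\ko g$ in the second slot.
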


The proof---formulated in terms of expectation values rather than transfer operators---can be found in Ref.~\onlinecite{Fukumizu04}. The assumption corresponds to $ \mathbb{H} $ being an invariant subspace of the Koopman operator. If $ \cov[XX] $ is invertible, we can immediately define the kernel Koopman operator as $ \ko[k] = \cov[XX]^{-1} \ts \cov[XY] $. Otherwise, the regularized version $ (\cov[XX] + \eta \ts \mathcal{I})^{-1} $ is used. An analogous result can be obtained for the Perron--Frobenius operator.

\begin{proposition}
Assuming that $ \pf g \in \mathbb{H} $ for all $ g \in \mathbb{H} $, we obtain $ \cov[XX] \pf g = \cov[XY] g $.
\end{proposition}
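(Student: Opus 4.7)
The plan is to mirror the proof sketched for the Koopman analogue (Proposition~5) as closely as possible, relying only on Proposition~3 (the identification of $L_2(\mu)$- and $\mathbb{H}$-inner products via $\cov[XX]$) together with the $L_2(\mu)$-duality between $\pf$ and $\ko$.

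First I would apply Proposition~3 to $f \in \mathbb{H}$ and to $\pf g \in \mathbb{H}$ (the latter lies in $\mathbb{H}$ by the standing hypothesis) to obtain $\innerprod{f}{\pf g}_\mu = \innerprod{f}{\cov[XX]\pf g}_\mathbb{H}$. Next I would exploit the $L_2(\mu)$-adjoint relation $\innerprod{f}{\pf g}_\mu = \innerprod{\ko f}{g}_\mu$; in the reversible molecular-dynamics setting of the paper the joint law of the training pairs $(X,Y)$ is symmetric, so that $\mcov[XY] = \mcov[XY]^\top$ and the right-hand side collapses further to $\innerprod{f}{\ko g}_\mu$. A second appeal to Proposition~3 (its second identity) gives $\innerprod{f}{\ko g}_\mu = \innerprod{f}{\cov[XY] g}_\mathbb{H}$. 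Chaining these three equalities yields $\innerprod{f}{\cov[XX]\pf g - \cov[XY] g}_\mathbb{H} = 0$ for every $f \in \mathbb{H}$, and since this holds on all of $\mathbb{H}$ the element $\cov[XX]\pf g - \cov[XY] g \in \mathbb{H}$ must vanish, giving the claimed identity.

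The main subtle step is the middle one: passing from $\innerprod{\ko f}{g}_\mu$ to $\innerprod{f}{\ko g}_\mu$. This is exactly detailed balance of the dynamics with respect to $\mu$. Without reversibility, the chain above instead terminates at $\cov[XX]\pf g = \cov[YX] g$, where $\cov[YX]$ is the $\mathbb{H}$-adjoint of $\cov[XY]$ (coefficient representation $\mcov[XY]^\top$). I would therefore either make the reversibility hypothesis explicit, or alternatively invoke Fukumizu's result (as the preceding proposition does) after reinterpreting $\pf$ as a conditional-expectation operator with the roles of $X$ and $Y$ exchanged; the inner-product chain then runs identically, producing the stated equation in the notation of the paper.
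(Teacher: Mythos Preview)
Your analysis is essentially correct, and you have in fact spotted a genuine issue: the paper's own proof terminates at $\innerprod{f}{\cov[YX] g}_\mathbb{H}$, not at $\innerprod{f}{\cov[XY] g}_\mathbb{H}$, and the very next line of the paper defines $\pf[k]=\cov[XX]^{-1}\cov[YX]$. So the printed statement should read $\cov[XX]\pf g=\cov[YX] g$, exactly the conclusion you say one obtains without reversibility.

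The paper's argument differs from both of your proposed routes in one detail. It does \emph{not} invoke reversibility, nor does it apply Fukumizu's result directly to $\pf$. Instead, after reaching $\innerprod{\ko f}{g}_\mu$ via duality, it applies Proposition~A.3 once more (in the first slot) to obtain $\innerprod{\cov[XX]\ko f}{g}_\mathbb{H}$, then uses Proposition~A.4 (Fukumizu for $\ko$) to replace $\cov[XX]\ko f$ by $\cov[XY] f$, and finally takes the $\mathbb{H}$-adjoint to get $\innerprod{f}{\cov[YX] g}_\mathbb{H}$. This is morally the same chain you describe, but it routes through the Koopman identity rather than through detailed balance or a role-swapped Fukumizu argument. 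A minor cost of the paper's route is that it tacitly also needs $\ko f\in\mathbb{H}$ (the hypothesis of Proposition~A.4), whereas your reversibility route only needs the stated hypothesis $\pf g\in\mathbb{H}$ together with self-adjointness of $\ko$ on $L_2(\mu)$.
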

\begin{proof}
Using the duality of the Perron--Frobenius and Koopman operator as well as Propositions~\ref{thm:covariance operators} and \ref{thm:Fukumizu04}, we can write
\begin{align*}
    \innerprod{f}{\cov[XX] \pf g}_\mathbb{H}
        &= \innerprod{f}{\pf g}_\mu
        = \innerprod{\ko f}{g}_\mu \\
        &= \innerprod{\cov[XX] \ko f}{g}_\mathbb{H}
        = \innerprod{\cov[XY] f}{g}_\mathbb{H} \\
        &= \innerprod{f}{\cov[YX] g}_\mathbb{H}. \qedhere
\end{align*}
\end{proof}

A similar result was shown in Ref.~\onlinecite{KSM17}. It follows that we can define the kernel Perron--Frobenius operator as $ \pf[k] = \cov[XX]^{-1} \ts \cov[YX] $, where regularization might be required as above. In general, however, $ \mathbb{H} $ will not be invariant under the action of the Koopman or Perron--Frobenius operator.

\subsection{EDMD and VAC}
\label{app:EDMD and VAC}

In what follows, we will analyze the convergence properties of the empirical estimates of the kernel transfer operators, given by
\begin{equation*}
    \eko[k] = \ecov[XX]^{-1} \ts \ecov[XY] = \big(\Phi \Phi^\top\big)^{-1} \big(\Phi \Psi^\top\big)
\end{equation*}
and
\begin{equation*}
    \epf[k] = \ecov[XX]^{-1} \ts \ecov[YX] = \big(\Phi \Phi^\top\big)^{-1} \big(\Psi \Phi^\top\big).
\end{equation*}
It is important to remark that these operator approximations can only be computed numerically if the feature space of the kernel is finite-dimensional. However, the resulting eigenvalue problems can be reformulated in such a way that only kernel evaluations are required. The feature space representation is then never computed explicitly, but only implicitly through the kernel. Assuming that the feature space is finite-dimensional, we obtain the following algorithm to estimate eigenfunctions of transfer operators, where $ ^H $ denotes the Hermitian transpose.

\begin{mdframed}[backgroundcolor=boxback,hidealllines=true]
\begin{textalgorithm} \label{alg:EDMD}
In order to approximate eigenfunctions of the Koopman operator:
\begin{enumerate}
\item Select a set of basis functions $ \phi $ and compute the matrices $ \ecov[XX] $ and $ \ecov[XY] $.
\item Solve the eigenvalue problem $ \ecov[XX]^{-1} \ts \ecov[XY] v = \lambda \ts v $.
\item Set $ \varphi = v^H \phi $.
\end{enumerate}
\vspace{1pt}
\end{textalgorithm}
\end{mdframed}

For the Perron--Frobenius operator, it suffices to replace $ \cov[XY] $ by $ \cov[YX] $ as described above. These empirical estimates are equivalent to VAC and EDMD, see also Ref.~\onlinecite{KSM17}. That is, in the infinite-data limit, we obtain the Galerkin approximation of the corresponding operator in the RKHS~$ \mathbb{H} $, cf.~Refs.~\onlinecite{WKR15, KKS16}. Note that for EDMD $ \phi $ does not necessarily have to be the Mercer feature space representation of the kernel. The Mercer feature space is orthogonal with respect to the inner product in $ L_2(\mu) $ so that the empirical estimate of the covariance matrix $ \ecov[XX] $ converges to $ \cov[XX] = \diag(\gamma_1, \dots, \gamma_n) $ and its inverse is simply $ \cov[XX]^{-1} = \diag(\gamma_1^{-1}, \dots, \gamma_n^{-1}) $. The convergence of EDMD to the Koopman operator for $ n \to \infty $, where $ n $ is the dimension of the state space, i.e., the cardinality of the index set $ I $, was first studied in Ref.~\onlinecite{KoMe17}.

The equivalence of the above algorithm and the methods derived in Section~\ref{sec:Kernel-based discretization of eigenvalue problems}---from a linear algebra point of view---was shown in Ref.~\onlinecite{WRK15}, where EDMD for the Koopman operator was \emph{kernelized} to obtain kernel EDMD. Similarly, equivalence of the different empirical RKHS operator eigenvalue problems---from a functional analytic point of view---was shown in Ref.~\onlinecite{MSKS18}.

\subsection{Kernel EDMD and kernel TICA}
\label{app:Kernel EDMD and kernel TICA}

In Ref.~\onlinecite{WRK15}, kernel EDMD for the Koopman operator was derived by kernelizing standard EDMD. A different derivation based on kernel transfer operators and embedded kernel transfer operators was presented in Ref.~\onlinecite{KSM17}. For $ M = 1 $, which corresponds to no averaging to compute time-lagged Gram matrices, the methods derived in Section~\ref{sec:Kernel-based discretization of eigenvalue problems} are identical to (extensions of) kernel EDMD. The main difference is that in Section~\ref{sec:Kernel-based discretization of eigenvalue problems} we directly discretized the transfer operator eigenvalue problem. This allows for exploiting the stochasticity of the system to improve the numerical stability of kernel-based approaches for molecular dynamics problems and other stochastic dynamical systems.

Similarly, our approach is also related to kernel TICA, which was proposed in Ref.~\onlinecite{SP15}. However, for the derivation of kernel TICA the system is assumed to be reversible. In order to obtain a real spectrum, the trajectory and time-reversed trajectory are used for the approximation. This can be accomplished by defining feature matrices $ \widehat{\Phi} = [\Phi, \; \Psi] $ and $ \widehat{\Psi} = [\Psi, \; \Phi] $ and applying the kernel-based methods to this new data set, which doubles the size of the resulting eigenvalue problem. Furthermore, their derivation is based on a variational principle and leads to a slightly different problem, which---using our notation---can be written as
\begin{equation*}
    \rgram[XX] \ts R \ts \rgram[XX] \, \beta = \lambda (\rgram[XX] \ts \rgram[XX] + \eta \ts I) \beta,
\end{equation*}
where $ \rgram[XX] $ is the augmented Gram matrix given by
\begin{equation*}
    \rgram[XX] = \widehat{\Phi}^\top \widehat{\Phi} = \begin{bmatrix} \gram[XX] & \gram[XY] \\ \gram[YX] & \gram[YY] \end{bmatrix}
    \quad \text{and} \quad
    R = \begin{bmatrix} 0 & I \\ I & 0 \end{bmatrix}.
\end{equation*}
Note that for the augmented Gram matrices
\begin{equation*}
    \rgram[XX] R =
    \begin{bmatrix}
        \gram[XX] & \gram[XY] \\ \gram[YX] & \gram[YY]
    \end{bmatrix}
    \begin{bmatrix}
        0 & I \\ I & 0
    \end{bmatrix}
    =
    \begin{bmatrix}
        \gram[XY] & \gram[XX] \\ \gram[YY] & \gram[YX]
    \end{bmatrix}
    = \rgram[XY].
\end{equation*}
Thus, this is equivalent to \eqref{eq:EVP PF} for the time-reversed system. As before, for large enough $ \eta $, the regularization term guarantees that the matrix on the right will be positive definite.

\end{document}